\DeclareRobustCommand*{\leftmodels}{%
  \Relbar\joinrel\mathrel{|}%
}
\tikzset{state/.style={
  rectangle,
  rounded corners,
  draw=black,
  minimum height=2em,
  minimum width=2em,
  align=center}
}
\tikzset{every picture/.append style={initial text=}}
\tikzset{accepting/.style = {double}}
\tikzset{>=stealth}
\tikzset{parallel above/.append style={transform canvas={yshift= 1mm}}}
\tikzset{parallel below/.append style={transform canvas={yshift=-1mm}}}
\tikzset{parallel right/.append style={transform canvas={xshift= 1.3mm}}}
\tikzset{parallel left/.append style={transform canvas={xshift=-1.3mm}}}
\newcommand{\lang}[0]{\ensuremath{\mathcal{L}}}
\newcommand{\true}{{\ensuremath{\mathbf{t\hspace{-0.5pt}t}}}} 
\newcommand{\false}{{\ensuremath{\mathbf{ff}}}}
\newcommand{\F}{{\ensuremath{\mathbf{F}}}}
\newcommand{\G}{{\ensuremath{\mathbf{G}}}}
\newcommand{\X}{{\ensuremath{\mathbf{X}}}}
\newcommand{\U}{{\ensuremath{\mathbf{U}}}}
\newcommand{\W}{{\ensuremath{\mathbf{W}}}}
\newcommand{\M}{{\ensuremath{\mathbf{M}}}}
\newcommand{\R}{{\ensuremath{\mathbf{R}}}}
\newcommand{\GF}{\ensuremath{\mathbf{G\hspace{-0.07cm}F\!}\,}}
\newcommand{\FG}{\ensuremath{\mathbf{F\hspace{-0.07cm}G\!}\,}}
\newcommand{\sfmu}{{\ensuremath{\mathbb{\mu}}}}
\newcommand{\sfnu}{{\ensuremath{\mathbb{\nu}}}}
\newcommand\N{\ensuremath{\mathbb{N}}}
\newcommand{\hole}{[ \quad ]}
\newcommand{\rank}[1]{\mathit{rank}({#1})}
\spnewtheorem{factt}{Fact}[section]{\bfseries}{\itshape}
\newcommand{\fnf}{1-form}
\newcommand{\snf}{1-2-form}
\newcommand{\nnodes}[1]{|#1|}
\newcommand{\gfba}[1]{n_{\text{lim}}({#1})}
\newcommand{\ubw}[1]{n_{u}({#1})}
\begin{document}


\title{A Simple Rewrite System for \\ the Normalization of Linear Temporal Logic\thanks{This work was partially supported by the Deutsche Forschungsgemeinschaft (DFG) under projects 183790222, 317422601, and 436811179; by the European Research Council (ERC) under the European Union's Horizon 2020 research and innovation programme under grant agreement No~787367 (PaVeS); by the Spanish MCI project ProCode (PID2019-108528RB-C22); and by the Spanish MU grants FPU17/02319 and EST21/00536.}}

\author{Javier Esparza\inst{1}, Rub\'en Rubio \inst{2}, and Salomon Sickert \inst{3}}
\authorrunning{J. Esparza, R. Rubio, S. Sickert}
\institute{Technical University of Munich,  Germany 
\and
Universidad Complutense de Madrid, Spain
\and
The Hebrew University, Jerusalem, Israel
}
\maketitle              

\begin{abstract}
In the mid 80s, Lichtenstein, Pnueli, and Zuck showed that every formula of Past LTL (the extension of Linear Temporal Logic with past operators) is equivalent to a conjunction of formulas of the form $\G\F \varphi \vee \F\G \psi$, where $\varphi$ and $\psi$ contain only past operators. Some years later, Chang, Manna, and Pnueli derived a similar normal form for LTL. Both normalization procedures have a non-elementary worst-case blow-up, and follow an involved path from formulas to counter-free automata to star-free regular expressions and back to formulas. In 2020, Sickert and Esparza presented a direct and purely syntactic normalization procedure for LTL yielding a normal form similar to the one by Chang, Manna, and Pnueli, with a single exponential blow-up, and applied it to the problem of constructing a succinct deterministic $\omega$-automaton for a given formula. However, their procedure had exponential time complexity in the best case. In particular, it does not perform better for formulas that are almost in normal form. In this paper we present an alternative normalization procedure based on a simple set of rewrite rules.
\end{abstract}

\section{Introduction}

In the late 1970s, Amir Pnueli introduced Linear Temporal Logic (LTL) into
computer science as a framework for specifying and verifying concurrent programs \cite{Pnueli77,Pnueli81}, 
a contribution that earned him the 1996 Turing Award.  During the 1980s and the early 1990s, Pnueli proceeded to study the properties expressible in LTL in collaboration with other researchers. In 1985, Lichtenstein, Pnueli and Zuck introduced a classification of LTL properties \cite{LPZ85}, later described in detail by Manna and Pnueli, who called it the \emph{safety-progress} hierarchy in \cite{MannaP89,MannaPnueli91}. These works consider an extended version of LTL with past operators, called Past LTL. The safety-progress hierarchy consists of a \emph{safety} class of formulas, and five \emph{progress} classes. The classes are defined semantically in terms of their models, and the largest class, called the \emph{reactivity} class in \cite{MannaP89,MannaPnueli91}, contains all properties expressible in LTL.  Manna and Pnueli provide syntactic characterizations of each class. In particular, they prove a fundamental theorem showing that every reactivity property is expressible as a conjunction of formulas of the form $\G\F\varphi \vee \F\G\psi$, where $\F \chi$ and $\G \chi$ mean that $\chi$ holds at some and at every point in the future, respectively, and  $\varphi, \psi$ only contain past operators. 

In 1992, Chang, Manna, and Pnueli presented a different and very elegant characterization of the safety-progress hierarchy in terms of standard LTL without past operators, containing only the future operators $\X$ (next), $\U$ (until), and $\W$ (weak until) \cite{ChangMP92}.  They show that every reactivity formula is equivalent to an LTL formula in negation normal form, such that every path through the syntax tree contains at most one alternation of $\U$ and $\W$. We call this fundamental result the Normalization Theorem. In the notation of \cite{CernaP03,PelanekS05,SickertE20}, which mimics the definition of the $\Sigma_i$, $\Pi_i$, and $\Delta_i$ classes of the arithmetical and polynomial hierarchies, they proved that every LTL formula is equivalent to a $\Delta_2$-formula.

While these normal forms have had large conceptual impact in model checking, automatic synthesis, and deductive verification (see e.g. \cite{PitermanP18} for a recent survey), the normalization \emph{procedures} have had none. In particular, contrary to the case of propositional or first-order logic, they have not been implemented in tools. The reason is that they are not direct, have high complexity, and their correctness proofs are involved. The proof of the Normalization Theorem sketched in \cite{ChangMP92} (to the best of our knowledge, a full proof was never published) relies on the 1985 theorem by Lichtenstein, Pnueli and Zuck, a complete proof of which can be found in Zuck's PhD Thesis \cite{Zuck86}. Zuck's proof translates the initial Past LTL formula into a counter-free semi-automaton, then applies the Krohn-Rhodes decomposition and other results to translate the automaton into a star-free regular expression, and finally translates this expression into a reactivity formula with a non-elementary blow-up. It is remarkable that, despite this prominence, only little progress has been made to improve Zuck's non-elementary normalization procedure, even though no lower bound was known. 

On the one hand, Maler and Pneuli have presented a double-exponential\footnote{For further details we refer the reader to \cite[Remark 1]{DBLP:conf/fossacs/BokerLS22}.} construction, based on the Krohn-Rhodes decomposition, translating a deterministic counter-free automaton into a Past LTL formula \cite{MP90,MP94,Mal10}. On the other hand, building upon this work, Boker, Lehtinen, and Sickert discovered a triple-exponential construction translating into a standard LTL formula without past operators \cite{DBLP:conf/fossacs/BokerLS22}. Noticeably, both constructions yield formulas in the normal forms for Past LTL (\cite{MannaP89,MannaPnueli91}) and standard LTL (\cite{ChangMP92,CernaP03,PelanekS05,SickertE20}), respectively.

In 2020, two of us presented a novel proof of the Normalization Theorem in \cite{SickertE20} (based on  \cite{Sickert19}). We showed that every formula $\varphi$ of LTL is equivalent to a formula of the form 
\[\bigvee_{M \subseteq \sfmu(\varphi), N \subseteq \sfnu(\varphi)} \varphi_{M,N}\]
\noindent where $\sfmu(\varphi)$ and $\sfnu(\varphi)$ are the sets of subformulas of $\varphi$ with top operator in $\{\U, \M\}$ and $\{ \W, \R\}$, respectively, and $\varphi_{M,N}$ is a $\Delta_2$-formula obtained from $\varphi$, $M$, and $N$ by means of a few syntactic rewrite rules. This yields a normalization procedure with single exponential complexity, which was applied in \cite{SickertE20,Sickert19} to the problem of translating LTL formulas into deterministic and limit-deterministic $\omega$-automata.

Despite being a clear improvement on the previous indirect and non-elementary procedures, the normalization algorithm of \cite{SickertE20} still has a problem: Since it has to consider all possible sets $M$ and $N$,  it has exponential time complexity \emph{in the best case}. Moreover, the algorithm is not goal-oriented, in the sense that it does not only concentrate on those parts of the formula that do not belong to $\Delta_2$. Consider for example a family of formulas 
$$\varphi_n = ((a \U b) \W  c)  \U  \psi_n$$
\noindent where $a, b, c$ are atomic propositions and $\psi_n$ is some very large formula containing only the \W\ operator. Intuitively, $\psi_n$ does not need to be touched by a normalization procedure, the only problem lies in the alternation \U-\W-\U\ along the leftmost branch of the syntax tree. However, the procedure of \cite{SickertE20} will be exponential in the number of \W-subformulas of $\psi_n$.

In this paper we provide a normalization procedure that solves these problems. The procedure is similar to the one for bringing a Boolean formula in conjunctive normal form (CNF). Recall that a Boolean formula is in CNF if in its syntax tree no conjunctions are below disjunctions, and only atomic propositions are below negations. The rewrite rules allow us to eliminate a node that violates one of these conditions; for example, if a conjunction is below a disjunction, we distribute the conjunction over the disjunction. In the case of LTL, instead of conjunctions and disjunctions we have to deal with different kinds of temporal operators, but we can still characterize the normal form in terms of constraints of the form ``no X-node of the syntax tree is below a Y-node''. Our rewrite rules eliminate nodes violating one of these constraints. 

The paper is organized as follows. Section \ref{sec:prelims} introduces the syntax and semantics of LTL. Section \ref{sec:hierarchy} defines the Safety-Progress hierarchy, and  recalls the Normalization Theorem of Chang, Manna, and Pnueli. Section \ref{sec:main} presents the rewrite system, and proves it correct. Section \ref{sec:summary} summarizes the normalization algorithm derived from the rewrite system, and Section \ref{sec:extensions} introduces some derived results and some extensions of the algorithm. Finally, Section \ref{sec:experiments} reports on an experimental evaluation.

\section{Preliminaries}
\label{sec:prelims}

Let $\Sigma$ be a finite alphabet. A \emph{word} $w$ over $\Sigma$ is an infinite sequence of letters $a_0 a_1 a_2 \dots$ with $a_i \in \Sigma$ for all $i \geq 0$, and a language is a set of words. A \emph{finite word} is a finite sequence of letters. The set of all words (finite words) is denoted $\Sigma^\omega$ ($\Sigma^*$). We let $w[i]$ (starting at $i=0$) denote the $i$-th letter of a word $w$. The finite infix $w[i]w[i+1]\dots w[j - 1]$ is abbreviated with $w_{ij}$ and the infinite suffix $w[i] w[i+1] \dots$ with $w_{i}$. We denote the infinite repetition of a finite word $a_0 \dots a_n$ by $(a_0 \dots a_n)^\omega = a_0 \dots a_n a_0 \dots a_n a_0 \dots$. A set of (finite or infinite) words is called a language. 

\begin{definition}
\label{def:ltlsyntax}
LTL formulas over a set $Ap$ of atomic propositions are constructed by the following syntax:
\begin{align*}
\varphi \Coloneqq \; & \true \mid \false \mid a \mid \neg a \mid \varphi \wedge \varphi \mid \varphi\vee\varphi \\ 
                     & \mid \X\varphi \mid \varphi\U\varphi \mid \varphi\W\varphi \mid \varphi\R\varphi \mid \varphi\M\varphi 
\end{align*}
\noindent where $a \in Ap$ is an atomic proposition and $\X$, $\U$, $\W$, $\R$, and $\M$ 
are the next, (strong) until, weak until, (weak) release, and strong release operators, respectively. 
\end{definition}

The inclusion of both the strong and weak until operators as well as the negation normal form are essential to our approach. The operators $\R$ and $\M$, however, are only added to ensure that every formula of length $n$ in the standard syntax, with negation but only the until operator, is equivalent to a formula of length $O(n)$ in our syntax. They can be removed  at the price of an exponential blow-up when translating formulas with occurrences of $\R$ and $\M$ into formulas without. The semantics is defined as usual:

\begin{definition}
\label{def:ltlsemantics}
Let $w$ be a word over the alphabet $\Sigma := 2^{Ap}$ and let $\varphi$ be a formula. The satisfaction relation $w \models \varphi$ is inductively defined as the smallest relation satisfying:
{\arraycolsep=1.8pt%
\[\begin{array}[t]{lclclcl}
w \models \true               & &  \mbox{ for every $w$ } & &                                                   \\
w \not \models \false         & &  \mbox{ for every $w$ }                                                      \\
w \models a & \mbox{ iff }    & a \in w[0]                                                     \\
w \models \neg a              & \mbox{ iff } & a \notin w[0]                                   \\
w \models \varphi \wedge \psi & \mbox{ iff } & w \models \varphi \text{ and } w \models \psi   \\
w \models \varphi \vee \psi   & \mbox{ iff } & w \models \varphi \text{ or } w \models \psi    \\
w \models \X \varphi      & \mbox{ iff } & w_1 \models \varphi \\
w \models \varphi \U \psi & \mbox{ iff } & \exists k. \, w_k \models \psi \text{ and } \forall j < k. \, w_j \models \varphi \\
w \models \varphi \M \psi & \mbox{ iff } & \exists k. \, w_k \models \varphi \text{ and } \forall j \leq k. \, w_j \models \psi \\
w \models \varphi \R \psi & \mbox{ iff } & \forall k. \, w_k \models \psi \text{ or } w \models \varphi\M \psi \\
w \models \varphi \W \psi & \mbox{ iff } & \forall k. \, w_k \models \varphi \text{ or } w \models \varphi\U \psi
\end{array}\]}%
We let $\lang(\varphi) \coloneqq \{ w \in \Sigma^\omega : w \models \varphi\}$ denote the language of $\varphi$.
We overload the definition of $\models$ and write $\varphi \models \psi$ as a shorthand for $\lang(\varphi) \subseteq \lang(\psi)$. Two formulas $\varphi$ and $\psi$ are \emph{equivalent}, denoted $\varphi \equiv \psi$, if $\lang(\varphi) = \lang(\psi)$. Further, we use the abbreviations 
$\F \varphi \coloneqq \true \, \U \, \varphi$ (eventually) and $\G \varphi \coloneqq \false \, \R \, \varphi$ (always). 
\end{definition}

\section{The Safety-Progress Hierarchy}
\label{sec:hierarchy}

We recall the hierarchy of temporal properties studied by Manna and Pnueli \cite{MannaP89} following the formulation of {\v{C}}ern{\'{a}} and Pel{\'{a}}nek \cite{CernaP03}. The definition formalizes the intuition that e.g. a safety property is
violated by an execution  if{}f one of its finite prefixes is ``bad'' or, equivalently, satisfied by an execution if{}f all its finite prefixes belong to a language of good prefixes.

\begin{definition}[\cite{MannaP89,CernaP03}]
Let $P \subseteq \Sigma^\omega$ be a property over $\Sigma$.
\begin{itemize}
\item $P$ is a safety property if there exists a language of finite words $L \subseteq \Sigma^*$ such that $w \in P$ if{}f all finite prefixes of $w$ belong to $L$.
\item $P$ is a guarantee property if there exists a language of finite words $L \subseteq \Sigma^*$ such that $w \in P$ if{}f there exists a finite prefix of $w$ which belongs to $L$.
\item $P$ is an obligation property if it can be expressed as a positive Boolean combination of safety and guarantee properties.
\item $P$ is a recurrence property if there exists a language of finite words $L \subseteq \Sigma^*$ such that $w \in P$ if{}f infinitely many prefixes of $w$ belong to $L$.
\item $P$ is a persistence property if there exists a language of finite words $L \subseteq \Sigma^*$ such that $w \in P$ if{}f all but finitely many prefixes of $w$ belong to $L$.
\item $P$ is a reactivity property if $P$ can be expressed as a positive Boolean combination of recurrence and persistence properties.
\end{itemize}
\end{definition}


The inclusions between these classes are shown in \Cref{fig:temporal_hierarchy}. 
Chang, Manna, and Pnueli give in \cite{ChangMP92} a syntactic characterization of the classes in terms of the following fragments of LTL:

\begin{definition}[Adapted from \cite{CernaP03}]
\label{def:future_hierarchy}
We define the following classes of LTL formulas:
\begin{itemize}
	\item The class $\Sigma_0 = \Pi_0 = \Delta_0$ is the least set of formulas containing all atomic propositions and their negations, and is closed under the application of conjunction and disjunction.
	\item The class $\Sigma_{i+1}$ is the least set of formulas containing $\Pi_i$ that is closed under the application of conjunction, disjunction, and the $\X$, $\U$, and $\M$ operators.
	\item The class $\Pi_{i+1}$ is the least set of formulas containing $\Sigma_i$ that is closed under the application of conjunction, disjunction, and the $\X$, $\R$, and $\W$ operators.
	\item The class $\Delta_{i+1}$ is the least set of formulas containing $\Sigma_{i+1}$ and $\Pi_{i+1}$ that is closed under the application of conjunction and disjunction.
\end{itemize}
\end{definition}

The following is a corollary of the proof of \cite[Thm. 8]{ChangMP92}:

\begin{theorem}[Adapted from \cite{CernaP03}]\label{thm:hierarchy:correspondence}
A property that is specifiable in LTL is a guarantee (safety, obligation, persistence, recurrence, reactivity, respectively) property if and only if it is specifiable by a formula from the class $\Sigma_1$, $(\Pi_1$, $\Delta_1$, $\Sigma_2$, $\Pi_2$, $\Delta_2$, respectively$).$
\end{theorem}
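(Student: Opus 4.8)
The plan is to prove the six biconditionals by separating each into its two directions and exploiting the pervasive duality of the setup. Negation interchanges the syntactic classes $\Sigma_i$ and $\Pi_i$: pushing $\neg$ to the literals uses $\neg(\varphi \U \psi) \equiv \neg\varphi \R \neg\psi$ and $\neg(\varphi \M \psi) \equiv \neg\varphi \W \neg\psi$, which swap the strong operators $\{\U,\M\}$ with the weak operators $\{\R,\W\}$ and fix the self-dual $\X$, so that $\neg$ maps $\Sigma_i$ to $\Pi_i$ and leaves $\Delta_i$ invariant. Semantically, complementation interchanges guarantee with safety and persistence with recurrence, and fixes the self-dual classes obligation and reactivity. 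Hence it suffices to establish the guarantee/$\Sigma_1$ and persistence/$\Sigma_2$ correspondences; the safety/$\Pi_1$ and recurrence/$\Pi_2$ ones then follow by complementation, and the obligation/$\Delta_1$ and reactivity/$\Delta_2$ ones reduce to handling positive Boolean combinations on top of the already-proved constituent classes.

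For the \enquote{if} directions (soundness: a formula in the syntactic class specifies a property in the semantic class) I would argue by induction over the inductive definition of the classes in \Cref{def:future_hierarchy}. In the base case a $\Delta_0$ formula is a Boolean combination of literals evaluated at position $0$, so it is both a safety property (via the prefix-closed good-prefix language consisting of the empty word together with every finite word whose first letter satisfies the formula) and, symmetrically, a guarantee property. The inductive step is a family of closure lemmas: the guarantee properties are closed under $\wedge$, $\vee$, $\X$, $\U$, and $\M$, and the persistence properties are closed under the same operators and contain every safety (hence every $\Pi_1$) property. The easy closures are read off the definitions — $\vee$ corresponds to union of good-prefix languages and $\X$ to prepending a letter — while the closures under $\U$ and $\M$ require packaging a witnessing position together with the obligations before it into a single finite prefix. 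The inclusion safety $\subseteq$ persistence needed as the base case for $\Sigma_2$ follows by taking the good-prefix language to be prefix-closed, so that \enquote{all prefixes good} and \enquote{all but finitely many prefixes good} coincide; closure of reactivity under positive Boolean combinations settles $\Delta_2$ in the same inductive fashion.

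The \enquote{only if} directions (completeness: every LTL-specifiable property in the semantic class is definable in the syntactic class) are the substance of the theorem and the main obstacle, and they are precisely the content of the proof of \cite[Thm.~8]{ChangMP92}; I would not expect a short syntactic argument here. The route is automata-theoretic: an LTL-specifiable property is star-free, so its minimal deterministic automaton is counter-free, and membership in the semantic class constrains the shape of the acceptance condition (reachability for guarantee, a deterministic co-B\"uchi condition for persistence, and so on, via the Landweber-style topological characterizations). One then applies a Krohn--Rhodes decomposition and the Maler--Pnueli construction to translate the counter-free automaton back into a formula, and finally verifies that the constrained acceptance shape forces the resulting formula into the claimed level of the $\Sigma/\Pi/\Delta$ hierarchy. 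The delicate and non-elementary part is exactly this last verification, that the automaton-to-formula translation respects the stratification; for the top class it coincides with the Normalization Theorem (reactivity $\Leftrightarrow \Delta_2$, i.e.\ every LTL formula is equivalent to a $\Delta_2$-formula), and for obligation one must additionally recover the safety and guarantee constituents LTL-definably rather than merely semantically. Since the present paper only recalls this correspondence as background, I would present the inductive soundness argument in full and cite \cite{ChangMP92,CernaP03} for the completeness directions.
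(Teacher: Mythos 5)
There is an important mismatch of expectations here: the paper does not prove \cref{thm:hierarchy:correspondence} at all. It is recalled purely as background, introduced with the sentence ``The following is a corollary of the proof of \cite[Thm.~8]{ChangMP92}'' and labelled ``Adapted from \cite{CernaP03}''; no argument, not even a sketch, appears anywhere in the text or the appendix. So there is no in-paper proof to measure your proposal against, and the honest comparison is between your sketch and the cited literature. On that score your proposal is a sound reconstruction of the standard argument, and in fact goes further than the paper does. The duality reductions are correct: negation in negation normal form swaps $\{\U,\M\}$ with $\{\R,\W\}$ and hence $\Sigma_i$ with $\Pi_i$ while fixing $\Delta_i$, and semantic complementation swaps guarantee with safety and persistence with recurrence while fixing obligation and reactivity, so the six equivalences do collapse to two plus the Boolean-combination cases. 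Your soundness induction is also right, including the two points that are easy to get wrong: the good-prefix language of a safety property may be taken prefix-closed (giving safety $\subseteq$ persistence, the base case for $\Sigma_2$), and closure of guarantee under $\U$ and $\M$ needs the ``package all witnesses into one prefix'' step. You correctly flag the two genuine difficulties in the completeness direction --- that the translation from counter-free automata back to formulas must respect the stratification, and that for obligation the safety and guarantee constituents must themselves be recovered LTL-definably --- and, like the paper, you defer these to \cite{ChangMP92,CernaP03}. Given that the paper treats the entire theorem as a citation, presenting the soundness induction and citing the completeness direction, as you propose, is a strictly more informative treatment than the paper's own; I see no gap in what you have written, only in what you (explicitly and legitimately) delegate to the cited works.
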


\begin{figure}
\begin{subfigure}[c]{0.51\columnwidth}
  \begin{center}
  \small
	\begin{tikzpicture}[x=1cm,y=0.75cm,outer sep=2pt]

    \node (padding1) at ( 0,0.3) {};
	\node (1) at ( 0, 0) {reactivity};
	\node (2) at ( 1,-1) {recurrence};
	\node (3) at (-1,-1) {persistence};
    \node (4) at ( 0,-2) {obligation};
	\node (5) at ( 1,-3) {safety};
	\node (6) at (-1,-3) {guarantee};
	\node (padding2) at ( 0,-3.2) {};

	\path
	(2) edge[draw=none] node[sloped]{$\supset$} (1)
    (3) edge[draw=none] node[sloped]{$\subset$} (1)
    
    (4) edge[draw=none] node[sloped]{$\subset$} (2)
    (4) edge[draw=none] node[sloped]{$\supset$} (3)
    
    (5) edge[draw=none] node[sloped]{$\supset$} (4)
    (6) edge[draw=none] node[sloped]{$\subset$} (4);

	\end{tikzpicture}
  \end{center}
\subcaption{Safety-progress hierarchy \cite{MannaP89}}
\label{fig:temporal_hierarchy}
\end{subfigure}%
\begin{subfigure}[c]{0.51\columnwidth}
  \begin{center}
  \small
	\begin{tikzpicture}[x=1cm,y=0.75cm,outer sep=2pt]

    \node (padding1) at ( 0,0.3) {};
	\node (1) at ( 0, 0) {$\Delta_2$};
	\node (2) at ( 1,-1) {$\Pi_2$};
	\node (3) at (-1,-1) {$\Sigma_2$};
    \node (4) at ( 0,-2) {$\Delta_1$};
	\node (5) at ( 1,-3) {$\Pi_1$};
	\node (6) at (-1,-3) {$\Sigma_1$};
    \node (padding2) at ( 0,-3.2) {};
    
    \path
	(2) edge[draw=none] node[sloped]{$\supset$} (1)
    (3) edge[draw=none] node[sloped]{$\subset$} (1)
    
    (4) edge[draw=none] node[sloped]{$\subset$} (2)
    (4) edge[draw=none] node[sloped]{$\supset$} (3)
    
    (5) edge[draw=none] node[sloped]{$\supset$} (4)
    (6) edge[draw=none] node[sloped]{$\subset$} (4);

%
%

	\end{tikzpicture}
  \end{center}
\subcaption{Syntactic-future hierarchy}
\label{fig:syntactic-future}
\end{subfigure}
\caption{Both hierarchies, side-by-side, indicating the correspondence of \Cref{thm:hierarchy:correspondence}}
\label{fig:hierarchies}
\end{figure}
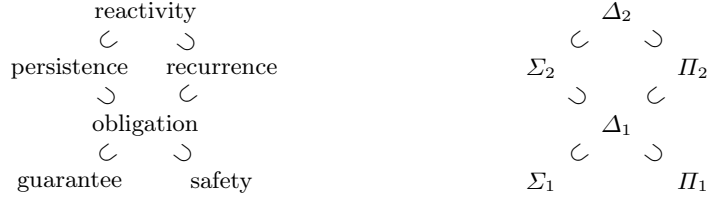

Together with the result of \cite{LPZ85}, stating that every formula of LTL is equivalent to a reactivity formula, Chang, Manna, and Pnueli obtain:

\begin{theorem}[Normalization Theorem \cite{LPZ85,MannaP89,ChangMP92}]
Every LTL formula is equivalent to a formula of $\Delta_2$.
\end{theorem}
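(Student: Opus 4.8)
The statement follows at once from results already on the table: by the theorem of Lichtenstein, Pnueli and Zuck every LTL formula is equivalent to a reactivity property, and by \Cref{thm:hierarchy:correspondence} a property specifiable in LTL is a reactivity property exactly when it is definable by a $\Delta_2$-formula. Chaining the two yields the claim. This is, however, precisely the indirect, automata-theoretic route (formula $\to$ counter-free automaton $\to$ star-free expression $\to$ formula) with non-elementary blow-up that the paper sets out to replace, so the plan is instead to give a direct, purely syntactic and constructive proof: a procedure that rewrites an arbitrary $\varphi$ into an equivalent formula in $\Delta_2$.

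First I would restate membership in $\Delta_2$ syntactically from the grammar of \Cref{def:future_hierarchy}. Unfolding the definitions, a negation-normal-form formula lies in $\Delta_2$ iff along every path from the root of its syntax tree to a leaf there is at most one alternation between the \emph{strong} operators $\U,\M$ and the \emph{weak} operators $\W,\R$; equivalently, along no path does a strong operator occur below a weak one that itself occurs below a strong one, and symmetrically with the roles exchanged. This reduces the theorem to a rewriting problem: fix a well-founded measure counting the maximal number of strong/weak alternations along a path, and exhibit equivalence-preserving rewrite rules that strictly decrease it until it drops to $1$.

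The engine of the rewrite system is a small set of local temporal identities that let one trade a strong operator for a weak one (and conversely), and thereby ``pull'' an offending operator past its neighbour, in analogy with distributing $\wedge$ over $\vee$ in the conversion to CNF. The basic trades are $\varphi \W \psi \equiv (\varphi \U \psi) \vee \G\varphi$ and $\varphi \U \psi \equiv (\varphi \W \psi) \wedge \F\psi$, together with their $\M/\R$ analogues and the stabilization facts that $\F\G$ and $\G\F$ each absorb one alternation. Each identity is justified semantically on an arbitrary word $w$: on any word either an ``eventually always'' or an ``always eventually'' condition on the relevant subformula holds, and in each case the strong and weak variants have the same models, so the rule preserves $\lang$. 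Correctness of the whole procedure then follows by checking that every rule preserves the language, and termination by showing that every rule strictly decreases the alternation measure.

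The main obstacle is to design rules that are simultaneously local, globally equivalence-preserving, and strictly decreasing, without reintroducing deeper alternations elsewhere --- the familiar pitfall of CNF conversion, where distributing one connective duplicates and nests others. The two delicate points are (i) choosing a measure that every local rule provably reduces, rather than merely shuffling alternations around the tree, and (ii) establishing the strong/weak trade identities, whose content is genuinely temporal: it rests on the fact that the truth of a $\U$- or $\W$-formula along a word eventually stabilizes, which is where all the work beyond propositional reasoning sits. Controlling the duplication so that the procedure stays single-exponential, and ideally touches only the subformulas that actually violate the normal form, is the quantitative crux that distinguishes this approach from the subset-enumeration scheme of \cite{SickertE20}.
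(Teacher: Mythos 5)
Your opening observation is correct as far as the literal statement goes: the theorem as cited is indeed the composition of the Lichtenstein--Pnueli--Zuck reactivity theorem with \Cref{thm:hierarchy:correspondence}, and the paper itself presents it that way before developing its own constructive proof (\Cref{thm:main}). Your plan for the direct proof also matches the paper's strategy in spirit: a rewrite system, a well-founded measure, and strong/weak trades justified by a case split on whether an ``infinitely often'' or ``eventually always'' condition holds. But the concrete rules you propose do not carry the proof, and the point where they fail is exactly the point you flag as ``the main obstacle'' and then leave unresolved. The trade $\varphi \U \psi \equiv (\varphi \W \psi) \wedge \F\psi$, applied to a $\U$-node sitting in the left argument of a $\W$-node, does not decrease any alternation measure: it turns the $\U$ into a $\W$ but plants $\F\psi = \true \U \psi$ in the very same position, so there is still a strong operator under the enclosing weak one, and the rewrite stalls or loops. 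The paper's corresponding rule, $\varphi_1[\psi_1 \U \psi_2] \W \varphi_2 \equiv (\GF \psi_2 \wedge \varphi_1[\psi_1 \W \psi_2] \W \varphi_2) \vee \varphi_1[\psi_1 \U \psi_2] \U (\varphi_2 \vee \G \varphi_1[\false])$, avoids this by two devices missing from your proposal: it rewrites \emph{all} occurrences of the offending $\psi_1 \U \psi_2$ simultaneously via a context $\varphi_1[\,\cdot\,]$, and it \emph{hoists} the side condition $\GF\psi_2$ out of the $\W$-formula to the top level, where it no longer counts as an alternation. Note that the guarded replacement ``$\psi_1\U\psi_2$ by $\psi_1\W\psi_2$ assuming $\GF\psi_2$'' is not a context-free identity --- it is sound only under the top-level $\GF\psi_2$ conjunct --- so it cannot be packaged as one of your purely local trades; this is why the rules must have the global, context-manipulating shape they do.

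Two further gaps. First, the case splits necessarily introduce new limit formulas $\GF\psi_2$, $\FG\psi_1$ whose arguments may themselves contain the ``wrong'' operators ($\W$ under $\GF$, $\U$ under $\FG$) and which may end up nested under other temporal operators; your remark that ``$\F\G$ and $\G\F$ each absorb one alternation'' gestures at this but does not supply the extra rules (the paper's Stages 2 and 3) or the separate termination measure they need, and without them the recursion is not obviously well-founded nor single-exponential. Second, a minor imprecision: ``at most one alternation of strong and weak operators along every path'' is necessary but not sufficient for syntactic membership in $\Delta_2$, which is closed only under $\wedge$ and $\vee$; for instance $\X(\sigma \vee \pi)$ with $\sigma \in \Sigma_2 \setminus \Pi_2$ and $\pi \in \Pi_2 \setminus \Sigma_2$ satisfies your path condition but is not in $\Delta_2$ as defined --- the direction of the single alternation must be uniform within each maximal temporal subformula. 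This does not sink the argument, but the measure you induct on has to be chosen against the correct target.
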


In \cite{SickertE20}, Sickert and Esparza obtain a new proof of the Normalization Theorem. They show that every formula $\varphi$ is equivalent to a formula of the form
$$\bigvee_{M \subseteq \sfmu(\varphi), N \subseteq \sfnu(\varphi)} \varphi_{M,N}$$
\noindent where $\sfmu(\varphi)$ and $\sfnu(\varphi)$ are the sets of subformulas of $\varphi$ with top operator in $\{\U, \M\}$ and $\{ \W, \R\}$, respectively, and $\varphi_{M,N}$ is a $\Delta_2$-formula obtained from $\varphi$, $M$, and $N$ by means of a few syntactic rewrite rules. Further, $\varphi_{M,N}$ is at most exponentially longer than $\varphi$. While this is a big improvement with respect to previous procedures, it requires to iterate over all subsets of $\sfmu(\varphi)$ and $\sfnu(\varphi)$, and so the procedure \emph{always} takes exponential time, even for simple families of formulas that have equivalent $\Delta_2$-formulas with only a linear blow-up.

\begin{example} \label{example:exp}
Consider the family of formulas 
$$\varphi_n =  (\cdots ((((a_0 \U a_1) \W a_2) \U a_3) \U a_4) \cdots \U a_n)$$
\noindent for $n \geq 3$. The sets $\sfmu(\varphi_n)$ and $\sfnu(\varphi_n)$ have size $n-1$ and 1, respectively. The procedure of \cite{SickertE20} yields a disjunction of $2^{n+1}$ formulas $\varphi_{M,N}$, and so it takes exponential time in $n$. However,  exhaustive application of a few simplification rules yields a short formula in normal form of length $\Theta(n)$:
$$\begin{array}{rcl}
\varphi_n & \equiv & (\G\F a_1 \wedge   (\cdots ((((a_0 \U a_1) \U ( a_2 \vee \G(a_0 \vee a_1))) \U a_3) \U a_4) \cdots \U a_n) \\[0.1cm]
& & \vee \; (\cdots ((((a_0 \U a_1) \U a_2) \U a_3) \U a_4) \cdots \U a_n)
\end{array}$$
Intuitively, in order to normalize $\varphi_n$ it suffices to solve the ``local'' problem caused by  the subformula $((a_0 \U a_1) \W a_2) \U a_3$ of $\varphi_n$, which is in $\Sigma_3$; however, the procedure of \cite{SickertE20} is blind to this  fact, and generates $2^{n+1}$ formulas, only to simplify them away later on.
\end{example}

\section{A Normalizing Rewrite System} \label{sec:main}

We present a rewrite system that allows us to normalize every LTL formula. As a corollary, we obtain an alternative proof of the Normalization Theorem. 

The key idea is to treat the combinations $\G\F$ (infinitely often) and $\F\G$ (almost always) of temporal operators as \emph{atomic} operators $\GF$ and $\FG$ (notice the typesetting with the two letters touching each other). We call them the \emph{limit operators}; intuitively, whether a word satisfies a formula $\GF \varphi$ or $\FG \varphi$ depends only on its behaviour ``in the limit'', in the sense that $w'w$ satisfies $\GF \varphi$ or $\FG \varphi$ if{}f $w$ does.\footnote{Limit operators are called \emph{suspendable} in \cite{babiak13}.} 
So we add the limit operators to the syntax.
Moreover, in order to simplify the presentation, we also temporarily remove the operators $\M$ and $\R$ (we reintroduce them in Section \ref{sec:extensions}). So we define:

\begin{definition}
Extended LTL formulas over a set $Ap$ of atomic propositions are generated by the syntax:
\begin{align*}
\varphi \Coloneqq \; & \true \mid \false \mid a \mid \neg a \mid \varphi \wedge \varphi \mid \varphi\vee\varphi \\ 
                     & \mid \X\varphi \mid \varphi\U\varphi \mid \varphi\W\varphi \mid \GF \varphi \mid \FG \varphi
\end{align*}
\end{definition}

\noindent When determining the class of a formula in the syntactic future hierarchy, $\GF$ and $\FG$  are implicitly replaced by $\G\F$ and $\F\hspace{0.02cm}\G$. For example, $\F \GF a$ is rewritten into $\F\G\F a$, and so it is a formula of $\Sigma_3$.
In the rest of the section we only consider extended formulas which are by construction negation normal form and call them just formulas. 

Let us now define the precise shape of our normal form, which is a bit more strict than $\Delta_2$.  Formulas of the form $\varphi \U \psi$, $\varphi \W \psi$, $\X\varphi$, $\GF\varphi$, and $\FG\varphi$ are called  \U-, \W-, \X-, \GF-, and \FG-formulas, respectively.
We refer to these formulas as \emph{temporal} formulas. 
The syntax tree $T_\varphi$ of a formula $\varphi$ is defined in the usual way, and $\nnodes{\varphi}$ denotes the number of nodes of $T_\varphi$.
A node of $T_\varphi$ is a \emph{\U-node} if the subformula rooted at it is a \U-formula. \W-, \GF-, \FG- and temporal nodes are defined analogously. 

\newcommand\normalFormConds{\begin{enumerate}
\item No \U-node is under a \W-node.
\item No limit node is under another temporal node.
\item No \W-node is under a \GF-node, and no \U-node is under a \FG-node.
\end{enumerate}}

\begin{definition}
\label{def:normalform}
Let $\varphi$ be an LTL formula. 
A node of $T_\varphi$ is a \emph{limit node} if it is either a \GF-node or a \FG-node. The formula
$\varphi$ is in \emph{normal form} if $T_\varphi$ satisfies the following properties:
\normalFormConds
\end{definition}

\begin{restatable}{remark}{normalFormCharacterization}
Observe that formulas in normal form belong to $\Delta_2$. Even a slightly stronger statement holds: a formula in normal form is a positive Boolean combination of formulas of $\Sigma_2$ and formulas of the form $\GF \psi$ such that $\psi \in \Sigma_1$ (and so $\GF \psi \in \Pi_2$). 

\smallskip

There is a dual normal form in which property 1. is replaced by ``no \W-node is under a \U-node'', and the other two properties do not change. Formulas in dual normal form are positive Boolean combination of formulas of $\Pi_2$ and formulas of the form $\FG \psi$ such that $\psi \in \Pi_1$. Once the Normalization Theorem for the primal normal form is proved, a corresponding theorem for the dual form follows as an easy corollary (see Section \ref{sec:extensions}).
\end{restatable}

In the following three subsections we incrementally normalize formulas by dealing with the three requirements of the normal form one by one. Intermediate normal forms are obtained between stages, which we define formally using the following two measures:

\begin{itemize}
\item $\ubw{\varphi}$ is the number of \U-nodes in $T_\varphi$ that are under some \W-node, but not under any limit node of $T_\varphi$. For example,  if $\varphi = (a \U b) \W (\FG (c \U d))$ then $\ubw{\varphi}=1$.
\item $\gfba{\varphi}$ is the number of distinct limit subformulas under some temporal operator. Formally,
$\gfba{\varphi}$ is the number of limit formulas $\psi'$ such that $\psi'$ is a proper subformula of a temporal subformula (proper or not) of $\varphi$. For example, if $\varphi = (\FG a \, \U \, \GF b) \vee (\GF b \, \W \, \FG a)$ then $\gfba{\varphi}=2$.

\end{itemize}

\begin{definition}
\label{def:1stnormalform}
An LTL formula $\varphi$ is in \emph{\fnf} if $\ubw{\varphi} = 0$, and in \emph{\snf} if $\ubw{\varphi} = 0$ and $\gfba{\varphi} = 0$.
\end{definition}

\noindent We proceed in three stages:

\begin{enumerate}
	\item We remove all \U-nodes that are under some \W-node, but not under any  limit node. The resulting formula is in \fnf.
	\item We remove all limit nodes under some other temporal node. The resulting formula is in \snf.
	\item We remove all \W-nodes under some \GF-node, and all \U-nodes under some \FG-node. The resulting formula is in normal form (\Cref{def:normalform}).
\end{enumerate}

\subsection*{Stage 1: Removing \U-nodes under \W-nodes.}

We consider formulas $\varphi$ with placeholders, i.e., ``holes'' that can be filled with a formula. 
Formally, let $\hole$ be a symbol denoting a special atomic proposition. 
A formula with placeholders is a formula with one or more occurrences of $\hole$, all of them positive (i.e., the formula has  no occurrence of $\neg\hole$. We denote by $\varphi[\psi]$ the result of filling each placeholder of $\varphi$ with an occurrence of $\psi$; formally, $\varphi[\psi]$ is the result of substituting $\psi$ for $\hole$ 
in $\varphi$. For example, if $\varphi\hole = (\hole \W (a \U \hole))$, then $\varphi[\X b] = (\X b) \W (a \U \X b)$. We assume that $\hole$ binds more strongly than any operator, e.g. $\varphi_1 \W \varphi_2[\psi] = \varphi_1 \W (\varphi_2[\psi])$.

\smallskip
This lemma, proved in the Appendix,  allows us to pull \U-subformulas out of \W-formulas:

\begin{restatable}{lemma}{lemmaUWandWU}
\label{lem:UWandWU}
\begin{eqnarray}
\varphi_1 \W \varphi_2[\psi_1 \U \psi_2] & \equiv & (\varphi_1 \U \varphi_2[\psi_1 \U \psi_2]) \vee \G \varphi_1 \label{eqWU} \\[1ex]
\varphi_1[\psi_1 \U \psi_2] \W \varphi_2 & \equiv & (\GF \! \psi_2 \wedge \varphi_1[\psi_1 \W \psi_2] \W \varphi_2) \label{eqUW} \\ & & ~ \vee ~ \varphi_1[\psi_1 \U \psi_2] \U (\varphi_2 \vee (\G \varphi_1[\false]))  \nonumber 
\end{eqnarray}
\end{restatable}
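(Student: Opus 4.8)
The plan is to prove the two equivalences separately, the first being essentially immediate and the second requiring a case analysis driven by the behaviour of $\psi_2$ ``in the limit''. For \eqref{eqWU} I would simply unfold the semantics of the weak until: directly from \Cref{def:ltlsemantics} one has $\alpha \W \beta \equiv \G\alpha \vee (\alpha \U \beta)$ for arbitrary $\alpha,\beta$, since $w \models \alpha \W \beta$ holds exactly when either $w_k \models \alpha$ for all $k$ (that is, $w \models \G\alpha$) or $w \models \alpha \U \beta$. Instantiating $\alpha := \varphi_1$ and $\beta := \varphi_2[\psi_1 \U \psi_2]$ yields \eqref{eqWU}; the placeholder plays no role here.

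The real work is \eqref{eqUW}, and I would first record two structural facts about the context $\varphi_1[\,\cdot\,]$. (i) \emph{Monotonicity}: since $\hole$ occurs only positively, $\chi \models \chi'$ implies $\varphi_1[\chi] \models \varphi_1[\chi']$; in particular $\varphi_1[\false] \models \varphi_1[\chi]$ for every $\chi$. (ii) \emph{Locality}: reading $\hole$ as a fresh atomic proposition, whether $w_i \models \varphi_1[\chi]$ depends only on the truth value of $\chi$ along the suffixes $w_j$ with $j \ge i$, because all LTL operators look to the future; hence if $\chi$ and $\chi'$ are satisfied by exactly the same suffixes $w_j$ with $j \ge i$, then $w_i \models \varphi_1[\chi] \Leftrightarrow w_i \models \varphi_1[\chi']$. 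I would also use the helper identity above to rewrite the left-hand side of \eqref{eqUW} as $\G\varphi_1[\psi_1 \U \psi_2] \vee (\varphi_1[\psi_1 \U \psi_2] \U \varphi_2)$.

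With these facts I would fix a word $w$ and argue both inclusions. The right-to-left inclusion is uniform. For the first disjunct, note that $\GF\psi_2$ is a limit property, so when it holds $\psi_2$ occurs infinitely often and $\psi_1 \U \psi_2$ and $\psi_1 \W \psi_2$ are satisfied by exactly the same suffixes of $w$ (whenever $\G\psi_1$ holds at a suffix, the next occurrence of $\psi_2$ witnesses $\psi_1 \U \psi_2$ there); by Locality this equivalence propagates through $\varphi_1[\,\cdot\,]$ and then through the outer $\W$, identifying the first disjunct with the left-hand side. For the second disjunct I would use Monotonicity: once the release formula $\varphi_2 \vee \G\varphi_1[\false]$ is reached, either $\varphi_2$ already yields $\varphi_1[\psi_1 \U \psi_2] \U \varphi_2$, or $\G\varphi_1[\false]$ forces $\varphi_1[\psi_1 \U \psi_2]$ to hold from that point on, so that $\G\varphi_1[\psi_1 \U \psi_2]$ holds; either way the left-hand side follows.

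For the left-to-right inclusion I would split on whether $w \models \GF\psi_2$. If it does, the same suffix-wise equivalence of $\psi_1 \U \psi_2$ and $\psi_1 \W \psi_2$ converts the left-hand side into the first disjunct. If it does not, $\psi_2$ occurs only finitely often, so there is a position $m$ beyond which $\psi_2$ never holds; then $\psi_1 \U \psi_2$ is false on every suffix $w_i$ with $i \ge m$, and by Locality $\varphi_1[\psi_1 \U \psi_2]$ agrees with $\varphi_1[\false]$ on those suffixes. Rewriting the left-hand side via the helper identity, the $\U$-case directly produces the second disjunct, while in the $\G$-case we get $w_m \models \G\varphi_1[\false]$, and $\varphi_1[\psi_1 \U \psi_2]$ holding at all positions before $m$ witnesses $\varphi_1[\psi_1 \U \psi_2] \U (\varphi_2 \vee \G\varphi_1[\false])$. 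I expect the crux to be making Locality precise: that substituting $\false$ for $\psi_1 \U \psi_2$ inside $\varphi_1[\,\cdot\,]$ is sound exactly on the suffixes past the last occurrence of $\psi_2$, and that this rewriting commutes with evaluating the outer $\W$ and $\U$. This is the one step where the positivity of the placeholder and the future-only semantics of LTL are both essential, and where a clean substitution lemma treating $\hole$ as a fresh proposition pays off.
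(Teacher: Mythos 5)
Your proposal is correct and follows essentially the same route as the paper: equation \eqref{eqWU} via the unfolding $\alpha \W \beta \equiv \G\alpha \vee (\alpha \U \beta)$, and equation \eqref{eqUW} via a case split on $\GF\psi_2$ using exactly the paper's two auxiliary facts, namely monotonicity of positive contexts and suffix-wise replacement (your ``Locality'' is the paper's replacement lemma in a slightly more pointwise form). The remaining case analysis --- finitely many occurrences of $\psi_2$ forcing $\psi_1 \U \psi_2$ to agree with $\false$ past some position, and $\G\varphi_1[\false]$ propagating back up through the context by monotonicity --- matches the published proof step for step.
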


\begin{proposition} \label{prop:23norm}
For every LTL formula $\varphi$ there exists an equivalent formula $\varphi'$ in \fnf{} such that $\nnodes{\varphi'} \leq 4^{2\nnodes{\varphi}} \cdot \nnodes{\varphi}$. Moreover, for every subformula $\GF \psi$ of $\varphi'$ the formula $\psi$ is a subformula of $\varphi$, and every \FG-subformula of $\varphi'$ is also a subformula of $\varphi$.
\end{proposition}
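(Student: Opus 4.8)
The plan is to eliminate, one at a time, the \U-nodes counted by $\ubw{\cdot}$ (those under a \W but not under a limit node), using the two equivalences of \Cref{lem:UWandWU} as left-to-right rewrite rules, and to organise the elimination \emph{top-down} so that the bodies placed under newly created $\GF$-operators are always genuine subformulas of $\varphi$. Concretely, I would define a normalisation function $N$ (with $\varphi' = N(\varphi)$) by recursion on the structure of $\varphi$. The Boolean, $\X$-, and \U-cases are immediate: $N$ is pushed to the subformulas and recombined, since a disjunction, conjunction, $\X$, or \U of \fnf{} formulas is again in \fnf{} (none of these nodes is a \W, so no counted \U-node is created). The $\GF$- and $\FG$-cases are returned essentially unchanged: everything below a limit node is automatically uncounted, so $\GF\psi$ and $\FG\psi$ are already in \fnf{}, and leaving them intact keeps their bodies equal to the original subformula $\psi$. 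The only real work is the \W-case.

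For a formula $\varphi_1\W\varphi_2$ I would first apply \eqref{eqWU} with the whole right argument as context, i.e. the standard identity $\varphi_1\W\varphi_2\equiv(\varphi_1\U\varphi_2)\vee\G\varphi_1$. The left disjunct is a \U-formula, so $N(\varphi_1)\U N(\varphi_2)$ is in \fnf{} by induction; it remains to normalise $\G\varphi_1=\varphi_1\W\false$. This I would handle by an auxiliary routine $N_G$ that repeatedly applies \eqref{eqUW} specialised to right argument $\false$ (with the obvious simplifications of $\false$). As long as $\varphi_1$ still contains a \U-node not under a limit, I select an \emph{outermost} such $\psi_1\U\psi_2$, write $\varphi_1=\varphi_1'[\psi_1\U\psi_2]$, and rewrite into $\bigl(\GF\psi_2\wedge N_G(\varphi_1'[\psi_1\W\psi_2])\bigr)\vee\bigl(N(\varphi_1'[\psi_1\U\psi_2])\U N_G(\varphi_1'[\false])\bigr)$.

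Correctness of the output is inherited from \Cref{lem:UWandWU} and these simplifications, and the \fnf{} property holds because every node I introduce is a $\vee$, $\wedge$, \U, or $\GF$ over pieces that are in \fnf{} by the (inner or outer) induction hypothesis. For termination I would note that $N$ recurses only on proper subformulas, whereas each \eqref{eqUW} step of $N_G(\mu)$ recurses on arguments that are either strictly smaller ($\varphi_1'[\false]$, and $N(\varphi_1'[\psi_1\U\psi_2])=N(\mu)$, which itself only descends into proper subformulas) or of the same size but with one fewer uncounted \U-node ($\varphi_1'[\psi_1\W\psi_2]$); ordering calls lexicographically by argument size, then by this \U-count, makes the mutual recursion well-founded. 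It is worth stressing that the obvious single measure $\ubw{\cdot}$ does \emph{not} work directly: both the standard identity and \eqref{eqUW} \emph{duplicate} the left argument $\mu$, so $\ubw{\cdot}$ can strictly increase after a step, which is exactly why the argument must be staged as an outer structural recursion feeding an inner recursion on the number of uncounted \U-nodes.

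The step I expect to be the real crux is the subformula property for the limit operators. No rule produces a \FG, and no subformula below a limit node is ever pulled, so the \FG-subformulas of $\varphi'$ are precisely \FG-subformulas of $\varphi$. A $\GF$-operator, on the other hand, is created by every \eqref{eqUW} step, with body the right component $\psi_2$ of the pulled \U; the danger is that after earlier rewrites $\psi_2$ is no longer a subformula of $\varphi$, and indeed a naive bottom-up or innermost strategy provably wraps modified formulas in $\GF$. Pulling an \emph{outermost} \U is what rules this out: rewriting an outer \U into a \W never changes the bodies of the \U's still nested below it, so whenever a \U is finally pulled its right component is still the untouched subformula it was in $\varphi$; hence every $\GF$-body is a subformula of $\varphi$. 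Finally, the size bound $\nnodes{\varphi'}\le 4^{2\nnodes{\varphi}}\cdot\nnodes{\varphi}$ would follow by solving the size recurrence induced by $N$ and $N_G$: each of the $O(\nnodes{\varphi})$ elimination levels branches into boundedly many copies with a constant-factor duplication of the argument, which yields the stated single-exponential blow-up.
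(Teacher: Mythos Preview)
Your approach is sound and close in spirit to the paper's: both rely on \Cref{lem:UWandWU} and on selecting an \emph{outermost} (the paper says ``maximal'') $\psi_1\U\psi_2$ when applying \eqref{eqUW}, which is indeed what guarantees that every freshly created $\GF$-body is an untouched subformula of $\varphi$. Your invariant for the limit property is correct, and your observation that no rule creates an \FG\ and that nothing below a limit is ever rewritten handles the \FG-clause.

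The organisation differs, though. The paper does not set up a mutual recursion $N/N_G$; it runs a single induction on the rank $\rank{\varphi}=\nnodes{\varphi}+\ubw{\varphi}$. At a \W-node with $\ubw{\varphi}>0$ it first checks whether $\varphi_2$ contains an uncounted \U: if so it applies \eqref{eqWU} once and recurses on $\varphi_1$, $\varphi_2$, and $\varphi_1\W\false$, all of strictly smaller rank; only otherwise does it apply \eqref{eqUW} to the whole $\varphi_1\W\varphi_2$. In particular the paper never touches a \W-formula that is already in \fnf, whereas you split every \W\ unconditionally. What the paper's single rank buys is an immediate size calculation: every step produces at most four pieces of strictly smaller rank, whence $\nnodes{\varphi'}\le 4^{\rank{\varphi}}\cdot\nnodes{\varphi}\le 4^{2\nnodes{\varphi}}\cdot\nnodes{\varphi}$.

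This is where your proposal has a real gap. Your termination argument needs a small fix---the call $N_G(\mu)\to N(\mu)$ keeps both size and \U-count, so the pair is not lexicographically decreasing; you need a third component (e.g.\ $N_G>N$) or to argue via the composed step $N_G(\mu)\to N(\mu)\to N(\text{proper subformula})$. More importantly, the size bound is only asserted. The phrase ``$O(\nnodes{\varphi})$ elimination levels with bounded branching'' does not yield the stated constant: along the $N_G$-chain of length $k$ you call $N$ on $\mu_0,\mu_1,\ldots,\mu_{k-1}$, where $\mu_i$ has $i$ additional (fake) \W's, and each such $N$-call spawns its own $N_G$-calls on their left arguments. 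One can still recover a single-exponential bound by exhibiting a rank such as $a\,\nnodes{\mu}+b\cdot(\text{\#\U's not under a limit})+[\text{call is }N_G]$ that strictly drops on every call, but obtaining the precise $4^{2\nnodes{\varphi}}\cdot\nnodes{\varphi}$ from \emph{your} recursion (which, unlike the paper's, also expands \W's already in \fnf) requires that analysis to be carried out rather than waved at.
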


\begin{proof}

We associate to each formula a rank, defined by $\rank{\varphi} = \nnodes{\varphi} + \ubw{\varphi}$. Observe that a formula $\varphi$ is in \fnf{} if{}f $\rank{\varphi} = \nnodes{\varphi}$.
Throughout the proof we say that a formula $\varphi'$ \emph{satisfies the limit property} if for every subformula $\GF \psi$ of $\varphi'$ the formula $\psi$ is a subformula of $\varphi$ and every \FG-subformula of $\varphi'$ is also a subformula of $\varphi$ (notice the asymmetry). Further, we say that a formula $\varphi'$ \emph{satisfies the size property} if $\nnodes{\varphi'} \leq 4^{\mathit{rank}(\varphi)} \cdot \nnodes{\varphi}$ from which the claimed size bound immediately follows.

We prove by induction on $\rank{\varphi}$ that $\varphi$ is equivalent to a formula $\varphi'$ in \fnf{} satisfying the limit and size properties. Within the inductive step we proceed by a case distinction of $\varphi$:

\medskip\noindent If $\varphi = \true, \false, \GF \psi, \FG \psi$ then $\varphi$ is already in \fnf{}, and satisfies the limit and size properties.

\medskip\noindent If $\varphi= \varphi_1 \wedge \varphi_2, \varphi_1 \vee \varphi_2, \varphi_1 \U \varphi_2$ then by induction hypothesis $\varphi_1$ and $\varphi_2$  can be normalized into formulas $\varphi_1'$ and $\varphi_2'$ satisfying the limit and size properties. The formulas $\varphi_1'  \wedge \varphi_2'$, $\varphi_1' \vee \varphi_2'$,  $\varphi_1'  \U \varphi_2'$ are then in \fnf{} (the latter because the additional \U-node is above any \W-node) and satisfy the limit property. The size property holds because:
	
\[\begin{array}{rl}
\nnodes{\varphi_1'} + \nnodes{\varphi_2'} + 1 & \leq 4^{\rank{\varphi_1}} \cdot \nnodes{\varphi_1} + 4^{\rank{\varphi_2}} \cdot \nnodes{\varphi_2} + 1 \\
                               				  & \leq 4^{\rank{\varphi_1}+\rank{\varphi_2}} \cdot (\nnodes{\varphi_1} + \nnodes{\varphi_2} + 1) \\
                               				  & \leq 4^{\rank{\varphi}} \cdot \nnodes{\varphi}
\end{array}\]

\medskip\noindent If $\varphi  = \X \varphi_1$, then by induction hypothesis there is a formula $\varphi_1'$ equivalent to $\varphi_1$ in \fnf{}, and so $\varphi$ is equivalent to $\X \varphi_1'$, which is in \fnf{} and satisfies the limit and size properties.

\medskip\noindent If $\varphi = \varphi_1 \W \varphi_2$ and $\ubw{\varphi} = 0$, then $\varphi_1 \W \varphi_2$ is already in \fnf{} and satisfies the limit and size properties.

\medskip\noindent If $\varphi = \varphi_1 \W \varphi_2$ and $\ubw{\varphi} > 0$, then we proceed by a case distinction:

\begin{itemize}
\item $\varphi_2$ contains at least one \U-node that is not under a limit node. Let $\psi_1 \U \psi_2$ be such a \U-node. We derive $\varphi_2\hole$ from $\varphi_2$ by replacing each \U-node labelled by $\psi_1 \U \psi_2$ by the special atomic proposition $\hole$. By \cref{lem:UWandWU}(\ref{eqWU}) we have:
	\[ \varphi_1 \W \varphi_2[\psi_1 \U \psi_2] \equiv \varphi_1 \U \varphi_2[\psi_1 \U \psi_2] \vee \varphi_1 \W \false\]
Since $\rank{\varphi_1} < \rank{\varphi}$, $\rank{\varphi_2} <  \rank{\varphi}$, and $\rank{\varphi_1 \W \false} < \rank{\varphi}$ (the latter because $\varphi_2$ contains at least one \U-node), by induction hypothesis $\varphi_1$,  $\varphi_2$, and $\varphi_1 \W \false$ can be normalized into formulas $\varphi_1'$, $\varphi_2'$, and $\varphi_3'$ satisfying the limit and size properties. So $\varphi$ can be normalized into $\varphi' = \varphi_1' \U \varphi_2' \vee \varphi_3'$. Moreover, $\varphi'$ satisfies the limit property, because all \GF- and \FG-subformulas of $\varphi'$ are subformulas of $\varphi_1'$, $\varphi_2'$, or $\varphi_3'$. For the size property we calculate:

\[\begin{array}{rl}
\nnodes{\varphi'} & = \nnodes{\varphi_1'} + \nnodes{\varphi_2'} + \nnodes{\varphi_3'} + 2 \\ 
                  & \leq 4^{\rank{\varphi_1}} \cdot \nnodes{\varphi_1} + 4^{\rank{\varphi_2}} \cdot \nnodes{\varphi_2} + 4^{\rank{\varphi_1\W \false}} \cdot \nnodes{\varphi_1\W\false} + 2\\
                  & \leq 4^{\rank{\varphi} - 1} \cdot (\nnodes{\varphi_1} + \nnodes{\varphi_2} + \nnodes{\varphi_1\W\false} + 2) \\
                  & \leq 4^{\rank{\varphi} - 1} \cdot 4 \cdot \nnodes{\varphi} = 4^{\rank{\varphi}} \cdot \nnodes{\varphi}
\end{array}\]

\item Every \U-node of $\varphi_2$ is under a limit node, and $\varphi_1$ contains at least one \U-node that is not under any  limit node. Then $\varphi_1$ contains a maximal subformula $\psi_1 \U \psi_2$ (with respect to the subformula order) that is not under a limit node. We derive $\varphi_1\hole$ from $\varphi_1$ by replacing each \U-node labelled by $\psi_1 \U \psi_2$ that does not appear under a limit node by the special atomic proposition $\hole$. By \cref{lem:UWandWU}(\ref{eqUW}), we have
\begin{multline*}
\varphi_1[\psi_1 \U \psi_2] \W \varphi_2 \equiv \\ \big(\GF \psi_2 \wedge \underbrace{\varphi_1[\psi_1 \W \psi_2] \W \varphi_2}_{\rho_1}\big) \vee \big( \underbrace{\varphi_1[\psi_1 \U \psi_2]}_{\rho_2}  \U  (\underbrace{\varphi_2 \vee (\varphi_1[\false] \W \false}_{\rho_3}) \big)
\end{multline*}
In order to apply the induction hypothesis we argue that $\rho_1$, $\rho_2$, and $\rho_3$ have rank smaller than $\varphi$, and thus can be normalized to $\rho_1'$, $\rho_2'$ and $\rho_3'$ satisfying the limit and size properties. The formula $\rho_1$ has the same number of nodes as $\varphi$, but fewer $\U$-nodes under $\W$-nodes; so $\ubw{\rho_1} < \ubw{\varphi}$ and thus $\rank{\rho_1} < \rank{\varphi}$. The same argument applies to $\rho_3$.
Finally, $\rank{\rho_2} < \rank{\varphi}$ follows from the fact that $\rho_2$ has fewer nodes than $\varphi$. So $\varphi$ can be normalized to $\varphi'=(\GF\psi_2 \wedge \rho_1') \vee (\rho_2' \U \rho_3')$.

We show that $\varphi'$ satisfies the limit property. Let $\GF \psi$ be a subformula of $\varphi'$. If  $\GF \psi = \GF \psi_2$, then we are done, because $\psi_2$ is a subformula of $\varphi$. Otherwise $\GF\psi$ is a subformula of $\rho_1'$, $\rho_2'$, or $\rho_3'$. Since all of them satisfy the limit property, $\psi$ is a subformula of $\varphi$, and we are done. Further, every $\FG$-subformula of $\varphi'$ belongs to $\rho_1'$, $\rho_2'$, or $\rho_3'$ and so it is also subformula of $\varphi$. For the size property we calculate:
\begin{align*}
\nnodes{\varphi'} & = \nnodes{\rho_1'} + \nnodes{\rho_2'} + \nnodes{\rho_3'} + \nnodes{\psi_2} + 4 \\ 
                  & \leq 4^{\rank{\rho_1}} \cdot \nnodes{\rho_1} + 4^{\rank{\rho_2}} \cdot \nnodes{\rho_2} + 4^{\rank{\rho_3}} \cdot \nnodes{\rho_3} + \nnodes{\varphi_1} + 4\\
                  & \leq 4^{\rank{\varphi} - 1} \cdot (\nnodes{\varphi} + \nnodes{\varphi} + \nnodes{\varphi})  + \nnodes{\varphi} + 4  \\
                  & \leq 4^{\rank{\varphi} - 1} \cdot 4 \cdot \nnodes{\varphi} = 4^{\rank{\varphi}} \cdot \nnodes{\varphi}
\end{align*}
\end{itemize}
%
%
\end{proof}

\subsection*{Stage 2: Moving $\GF$- and $\FG$-subformulas up.}

In this section, we address the second property of the normal form. The following lemma allows us to pull limit subformulas out of any temporal formula. (Note that the second rule is only necessary if the formula before stage 1 contained \FG-subformulas, since stage 1 only creates new \GF-formulas.)

\begin{restatable}{lemma}{lemmaGFI}
\label{lemma:GF1}
\begin{eqnarray} 
\varphi[\GF \psi] & \equiv & (\GF \psi \wedge \varphi[\true] ) \vee \varphi[\false]  \label{eqGF1} 
\\ \varphi[\FG \psi] & \equiv & (\FG \psi \wedge \varphi[\true] ) \vee \varphi[\false]  \label{eqFG1}
\end{eqnarray} 
\end{restatable}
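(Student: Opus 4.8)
The plan is to reduce both equivalences to a single, more basic fact about the limit operators, namely their \emph{suspendability}: for $\xi \in \{\GF\psi, \FG\psi\}$ and any word $w$, the truth value of $\xi$ is the same at every suffix of $w$, that is, $w_i \models \xi$ either holds for all $i \geq 0$ or for no $i$. This follows at once from the characterizations ``$w \models \GF\psi$ iff infinitely many suffixes $w_j$ satisfy $\psi$'' and ``$w \models \FG\psi$ iff all but finitely many suffixes $w_j$ satisfy $\psi$'': dropping a finite prefix alters neither an infinite nor a cofinite set of positions by more than finitely many elements, so neither property is affected. I would record this as $w_i \models \xi \iff w \models \xi$ for every $i \geq 0$.

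Next I would prove a substitution lemma: if $\xi$ has a constant truth value $b \in \{\true, \false\}$ across all suffixes of $w$ --- meaning $w_j \models \xi \iff w_j \models b$ for every $j$ --- then $w_i \models \chi[\xi] \iff w_i \models \chi[b]$ for every formula-with-placeholders $\chi$ and every $i \geq 0$. The proof is a routine structural induction on $\chi$, carried out uniformly over all suffixes so that the induction hypothesis remains available at the shifted positions $w_{i+1}$, $w_j$, etc. produced by the $\X$-, $\U$-, and $\W$-clauses. The base case $\chi = \hole$ is exactly the hypothesis $w_i \models \xi \iff w_i \models b$; the atomic and constant cases are immediate since $\chi$ then contains no placeholder; and each Boolean and temporal case follows by unfolding the semantics and invoking the induction hypothesis on the direct subformulas.

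Combining the two facts proves the lemma. Fix $w$ and let $\xi = \GF\psi$ (the case $\xi = \FG\psi$ is identical). By suspendability $\xi$ has a constant truth value across the suffixes of $w$, so the substitution lemma applies with $i = 0$. If $w \models \xi$, then $b = \true$ and $w \models \varphi[\xi] \iff w \models \varphi[\true]$; on the right-hand side $\GF\psi$ is true, so it reduces to $\varphi[\true] \vee \varphi[\false]$, which is equivalent to $\varphi[\true]$ because all placeholders occur positively and hence $\varphi[\false] \models \varphi[\true]$ by monotonicity of positive contexts. If $w \not\models \xi$, then $b = \false$ and $w \models \varphi[\xi] \iff w \models \varphi[\false]$, while the right-hand side collapses to $\varphi[\false]$ since its first disjunct is false. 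In both cases the two sides agree on $w$, which proves \eqref{eqGF1}; \eqref{eqFG1} is obtained verbatim. The only genuinely delicate point is the suspendability argument (together with remembering to invoke positivity of the placeholders for the simplification $\varphi[\true] \vee \varphi[\false] \equiv \varphi[\true]$); the substitution lemma itself is a mechanical induction.
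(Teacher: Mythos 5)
Your proposal is correct and follows essentially the same route as the paper's proof: the paper also reduces both equations to the fact that the limit formula has a constant truth value over all suffixes of a fixed word (it phrases this as $\G\psi \equiv \psi$, hence $\psi \equiv^w \true$ or $\psi \equiv^w \false$), then applies the same two auxiliary facts you identify, namely the suffix-wise substitution lemma and monotonicity of positive contexts ($\false \models \psi$ implies $\varphi[\false] \models \varphi[\psi]$). The only cosmetic difference is that you organize the argument as a case split on $w \models \xi$ while the paper argues the two entailment directions separately.
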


We show using (\ref{eqGF1}) and (\ref{eqFG1}) that every formula in \fnf{} can be transformed into an equivalent formula in \snf.



\begin{proposition} \label{prop:snf}
Every LTL formula $\varphi$ in \fnf{} is equivalent to a formula $\varphi'$ in \snf{} such that $\nnodes{\varphi'} \leq 3^{\gfba{\varphi}} \cdot \nnodes{\varphi}$. Moreover, the size of the limit subformulas does not increase: for every $b > 0$, if $\nnodes{\psi} \leq b$ for every limit subformula of $\varphi$, then $\nnodes{\psi'} \leq b$ for every limit subformula of $\varphi'$.
\end{proposition}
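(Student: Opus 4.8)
The plan is to proceed by induction on the measure $\gfba{\varphi}$, using Lemma~\ref{lemma:GF1} as the single engine for lifting one trapped limit subformula at a time to the top of the syntax tree. If $\gfba{\varphi} = 0$ then $\varphi$ is already in \snf{} (it is in \fnf{} by hypothesis and has no limit node below a temporal operator), so I would take $\varphi' = \varphi$; the size bound holds since $3^0 = 1$, and the claim about limit subformulas is immediate. For the inductive step, assuming $\gfba{\varphi} = k > 0$, I would select a limit subformula $\chi$ that is \emph{innermost} among those sitting properly below some temporal operator, i.e.\ minimal in the subformula order among the formulas counted by $\gfba{\varphi}$. Minimality guarantees that $\chi = \GF\psi$ or $\chi = \FG\psi$ with $\psi$ containing no limit subformula at all: any limit subformula of $\psi$ would be a proper subformula of the temporal formula $\chi$, hence a strictly smaller trapped limit subformula, contradicting minimality. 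Writing $C[\hole]$ for the context obtained by replacing every (positive) occurrence of $\chi$ in $\varphi$ by a hole, I would apply (\ref{eqGF1}) or (\ref{eqFG1}) to rewrite
$$\varphi \;\equiv\; (\chi \wedge C[\true]) \vee C[\false],$$
and then normalize $C[\true]$ and $C[\false]$ recursively.

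Next I would establish the three facts that license the recursion and the reassembly. First, $C[\true]$ and $C[\false]$ are again in \fnf{}: replacing a subformula by a constant only deletes nodes while keeping every surviving node's ancestors intact, so no new violation of \fnf{} arises and $\gfba$ aside the count $\ubw{\cdot}$ stays $0$. Second, and this is the crux, $\gfba{C[\true]}, \gfba{C[\false]} \le k-1$. Third, $\chi$ is itself already in \snf{}: its top $\GF$/$\FG$ shields any inner \U-nodes, so $\ubw{\chi}=0$, and $\psi$ carries no limit subformula, so $\gfba{\chi}=0$. Granting these, the induction hypothesis yields $(C[\true])'$ and $(C[\false])'$ in \snf{}, and I would set $\varphi' = (\chi \wedge (C[\true])') \vee (C[\false])'$. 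Verifying that $\varphi'$ is in \snf{} is then routine: the only operators placed above the normalized parts are $\wedge$ and $\vee$ (not temporal) and the shielded limit operator of $\chi$, so no limit node lands below a temporal operator and no $\U$/$\W$ alternation is created.

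The main obstacle is the strict decrease $\gfba{C[\true]}\le k-1$, because substituting $\true$ for $\chi$ deep inside a larger limit formula produces a \emph{modified} limit subformula that is not literally a subformula of $\varphi$. I would handle this by exhibiting an injection from the trapped limit subformulas of $C[\true]$ into those of $\varphi$ other than $\chi$: send each such formula to its preimage under the substitution (replace the inserted $\true$'s back by $\chi$). This map is well defined because the substitution occurs strictly below the formula in question, so its temporal ancestors are untouched and the preimage is genuinely trapped in $\varphi$; it is injective because the substitution is deterministic; and $\chi$ is not in the image, since a formula cannot properly contain itself. The same argument applies to $C[\false]$. For the size bound I would substitute the induction hypothesis $\nnodes{(C[\true])'} \le 3^{k-1}\nnodes{C[\true]}$ (and likewise for $\false$) into $\nnodes{\varphi'} = \nnodes{\chi} + \nnodes{(C[\true])'} + \nnodes{(C[\false])'} + 2$, using $\nnodes{C[\true]} = \nnodes{C[\false]} = \nnodes{\varphi} - m(\nnodes{\chi}-1)$ where $m\ge 1$ is the number of occurrences of $\chi$; a short calculation exploiting $\nnodes{\chi}\ge 2$ and $\nnodes{\chi}\le\nnodes{\varphi}$ then gives $\nnodes{\varphi'}\le 3^k\nnodes{\varphi}$. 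The ``moreover'' clause falls out of the same bookkeeping: $\chi$ has size at most $b$ by hypothesis, every limit subformula of $C[\true]$ or $C[\false]$ is either an unmodified limit subformula of $\varphi$ (size $\le b$) or a shrunken version of one (size $< b$), and the induction hypothesis preserves this bound through the recursive normalization.
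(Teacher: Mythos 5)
Your proof is correct and follows essentially the same route as the paper's: induct on the number of trapped limit subformulas, pick a minimal one (so its body is limit-free), apply rules (\ref{eqGF1})/(\ref{eqFG1}) to split into $(\chi\wedge C[\true])\vee C[\false]$, and recurse on the two substitution instances, with the same counting and size bookkeeping. The paper states the decrease of the induction measure more tersely ("one limit operator less"), whereas you justify it via an explicit injection, but the argument is the same.
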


\begin{proof}
We proceed by induction on the number of proper limit sub\-formulas of $\varphi$. If $\varphi$ does not contain any, then it is already in \snf. Assume there exists such a proper limit subformula $\psi$ that is smaller (or incomparable) to all other limit subformulas of $\varphi$ according to the subformula order. We derive $\varphi\hole$ from $\varphi$ by replacing each limit-node labelled by $\psi$ by the special atomic proposition $\hole$. We then apply \cref{lemma:GF1} to obtain:
\[ \varphi[\psi] \equiv (\psi \wedge \varphi[\true] ) \vee \varphi[\false] \qquad \mbox{ where } \psi = \GF \psi', \FG \psi'  \ .\]
Note that $\psi$ does not properly contain any  limit subformula, and so it is in \snf{}. Both $\varphi[\true]$ and $\varphi[\false]$ are still in \fnf{} and they have one limit operator less than $\varphi$. Thus they can be normalized by the induction hypothesis into $\varphi_1'$ and $\varphi_2'$ in \snf. Finally, $\varphi' = (\psi \wedge \varphi_1') \vee \varphi_2'$ is a Boolean combination of formulas in \snf, so it is in \snf.	The number of nodes of $T_{\varphi'}$ can be crudely bounded as follows:
\begin{equation*}
\begin{array}{rl}
	\nnodes{\varphi'} & \leq \nnodes{\varphi_1'} + \nnodes{\varphi_2'} + \nnodes{\psi} + 2 \\[1ex]
	                  & \leq 2 \cdot 3^{\gfba{\varphi[\true]}} \cdot \nnodes{\varphi[\true]} + \nnodes{\psi} + 2 \\[1ex]
	                  & \leq 3^{\gfba{\varphi[\true]}} \cdot \big(2 \cdot (\nnodes{\varphi} - \nnodes{\psi} + 1) + \nnodes{\psi} + 2\big)\\[1ex]
	                  & \leq 3^{\gfba{\varphi} - 1} \cdot (2	 \nnodes{\varphi} - \nnodes{\psi} + 4)\\[1ex]
	                  & \leq 3^{\gfba{\varphi} - 1} \cdot (3 \nnodes{\varphi}) = 3^{\gfba{\varphi}} \cdot \nnodes{\varphi}
\end{array}
\end{equation*}
where the induction hypothesis is used in the second inequality, and $\nnodes{\psi} \geq 2$ and $\nnodes{\varphi} \geq 2$ in the last one.

To show that the size of the limit subformulas does not increase, let $b$ be a bound on the size of the \GF-subformulas of $\varphi$. We claim that the size of each \GF-subformula of $\varphi'$ is also bounded by $b$ (the case of $\FG \psi$ is analogous). Indeed, the \GF-subformulas of $\varphi'$ are $\psi$ (which is already in $\varphi$) and the \GF-subformulas of $\varphi_1'$ and $\varphi_2'$. Since the \GF-subformulas of $\varphi[\true]$ and $\varphi[\false]$ can only have decreased in size, by induction hypothesis the number of nodes of any \GF-subformula of $\varphi_1'$ and $\varphi_2'$ is bounded by $b$, and we are done.
\end{proof}


\subsection*{Stage 3: Removing $\W$-nodes ($\U$-nodes)  under $\GF$-nodes ($\FG$-nodes)}

The normalization of LTL formulas is completed in this section by fixing the problems within limit subformulas. In order to do so, we introduce two new rewrite rules that allow us to pull \W-subformulas out of \GF-formulas, and \U-subformulas out of \FG-formulas.

\begin{restatable}{lemma}{lemmaGFII}
\label{lemma:GF2}
\begin{eqnarray} 
\GF \varphi [\psi_1 \W \psi_2]  & \equiv & \GF \varphi [\psi_1 \U \psi_2] \vee  (\FG \psi_1 \wedge \GF \varphi [\true])  \;\;\; \label{eqGF2} \\
\FG \varphi[\psi_1 \U \psi_2] & \equiv & (\GF \psi_2 \wedge \FG \varphi[\psi_1 \W \psi_2]) \vee \FG \varphi[\false]  \;\;\; \label{eqFG2}
\end{eqnarray}
\end{restatable}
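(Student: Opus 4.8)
The plan is to prove both equivalences semantically, reducing each to a case distinction on the limiting behaviour of the subformula being pulled out ($\psi_1$ for \eqref{eqGF2}, $\psi_2$ for \eqref{eqFG2}). In each branch I would identify a suffix index beyond which $\psi_1 \W \psi_2$ collapses, at every position, to $\psi_1 \U \psi_2$, to $\true$, or to $\false$, and then exploit that $\GF$ and $\FG$ are insensitive to finite prefixes.

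First I would isolate the only structural property of the context $\varphi[\cdot]$ that is needed: a locality/substitution fact. Since every placeholder occurrence is positive and every operator of the extended syntax refers only to the future, the truth of $\varphi[\chi]$ at a suffix $w_k$ depends only on the truth values of $\chi$ (and of the atomic propositions) at suffixes $w_{k'}$ with $k' \geq k$. Concretely, by structural induction on $\varphi$ I would show that if $w_{k'} \models \chi_1 \Leftrightarrow w_{k'} \models \chi_2$ for all $k' \geq k$, then $w_k \models \varphi[\chi_1] \Leftrightarrow w_k \models \varphi[\chi_2]$; and that positivity yields the monotonicity law $\chi_1 \models \chi_2 \Rightarrow \varphi[\chi_1] \models \varphi[\chi_2]$, which lifts through $\GF$ and $\FG$ since both are monotone.

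Next I would record the two propositional-until identities $\psi_1 \W \psi_2 \equiv (\psi_1 \U \psi_2) \vee \G\psi_1$ and $\psi_1 \U \psi_2 \equiv (\psi_1 \W \psi_2) \wedge \F\psi_2$, both immediate from the semantics. For \eqref{eqGF2} I would split on whether $w \models \FG\psi_1$. If $\FG\psi_1$ fails then $\psi_1$ is false infinitely often, so $\G\psi_1$ is false at every suffix; hence $\psi_1 \W \psi_2$ and $\psi_1 \U \psi_2$ agree at every suffix, the substitution fact gives $w_k \models \varphi[\psi_1\W\psi_2] \Leftrightarrow w_k \models \varphi[\psi_1\U\psi_2]$ for all $k$, and therefore $\GF\varphi[\psi_1\W\psi_2] \Leftrightarrow \GF\varphi[\psi_1\U\psi_2]$, which matches the right-hand side because its second disjunct is false. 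If $\FG\psi_1$ holds, then from some index $N$ on, $\G\psi_1$ and hence $\psi_1\W\psi_2$ holds at every suffix $w_{k'}$ with $k' \geq N$; the substitution fact then equates $\varphi[\psi_1\W\psi_2]$ with $\varphi[\true]$ at every position $k \geq N$, and since $\GF$ only inspects the tail we get $\GF\varphi[\psi_1\W\psi_2] \Leftrightarrow \GF\varphi[\true]$, while monotonicity ($\psi_1\U\psi_2 \models \true$) absorbs the redundant $\GF\varphi[\psi_1\U\psi_2]$ disjunct. Equivalence \eqref{eqFG2} is dual, splitting on $\GF\psi_2$: when $\GF\psi_2$ holds, $\F\psi_2$ holds at every suffix so $\psi_1\U\psi_2$ and $\psi_1\W\psi_2$ coincide pointwise and $\FG\varphi[\psi_1\U\psi_2] \Leftrightarrow \FG\varphi[\psi_1\W\psi_2]$; when $\GF\psi_2$ fails, $\psi_2$ is eventually always false, so $\psi_1\U\psi_2$ is false on a whole tail, equating $\varphi[\psi_1\U\psi_2]$ with $\varphi[\false]$ there, whence $\FG\varphi[\psi_1\U\psi_2] \Leftrightarrow \FG\varphi[\false]$, with the other disjunct discarded by monotonicity.

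I expect the main obstacle to be the substitution fact rather than the case analysis. One must phrase it precisely enough that agreement of $\chi_1,\chi_2$ on the whole tail $k' \geq k$ propagates through nested $\X$, $\U$, $\W$, $\GF$, and $\FG$ to agreement of $\varphi[\chi_1],\varphi[\chi_2]$ at $k$, and then combine it correctly with the prefix-insensitivity of $\GF$ and $\FG$, so that equating the two substituted formulas merely on a suffix $\{k' \geq N\}$ already equates the corresponding limit formulas. Once this fact and the two until-identities are available, each equivalence follows by checking, in every branch of the split, that the identified pointwise collapse makes the two sides agree, using monotonicity to remove the now-redundant disjunct.
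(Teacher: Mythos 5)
Your proposal is correct and takes essentially the same route as the paper: the same case split on $\FG\psi_1$ (resp.\ $\GF\psi_2$), the same pointwise collapse of $\psi_1 \W \psi_2$ to $\psi_1 \U \psi_2$ or $\true$ on a tail (resp.\ of $\psi_1 \U \psi_2$ to $\psi_1 \W \psi_2$ or $\false$), and the same two auxiliary facts --- a tail-substitution lemma and monotonicity of positive contexts --- that the paper isolates in its appendix. The only differences are presentational (the paper argues the two implications separately rather than reducing both sides to a common formula in each case), plus one harmless slip: in your final branch the first disjunct of the right-hand side of \eqref{eqFG2} is discarded because $\GF\psi_2$ is false there, not by monotonicity.
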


The following proposition repeatedly applies these rules to show that limit formulas can be normalized with an exponential blowup.

\begin{proposition} \label{prop:gffg}
For every LTL formula $\varphi$ without limit operators, $\GF \varphi$ and $\FG \varphi$ can be normalized into formulas with at most $\nnodes{\varphi'} \leq 3^{\nnodes{\varphi}} \cdot \nnodes{\varphi}$ nodes.
\end{proposition}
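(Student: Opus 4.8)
The plan is to prove the statements for $\GF\varphi$ and $\FG\varphi$ \emph{simultaneously}, by a single well-founded induction, using the two rules of \cref{lemma:GF2} to peel temporal operators out of a limit operator one at a time. For $\GF\varphi$ the goal is to delete every \W-node lying under the $\GF$; for $\FG\varphi$, every \U-node under the $\FG$. Since $\varphi$ has no limit operators, the body of each limit formula is built only from atomic propositions, Boolean connectives, \X, \U and \W, so these are the only offending nodes; once they are gone the result is a positive Boolean combination of formulas $\GF\sigma$ with $\sigma\in\Sigma_1$ and $\FG\tau$ with $\tau\in\Pi_1$, which is in normal form (cf.\ the Remark).

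\textbf{Measure and induction.} I would induct primarily on $\nnodes{\varphi}$, and, within a fixed size, secondarily on $\countw{\varphi}$ in the $\GF$-case and on $\countu{\varphi}$ in the $\FG$-case. Consider $\GF\varphi$; the $\FG$-case is dual, using \eqref{eqFG2}. If $\countw{\varphi}=0$, then $\GF\varphi$ is already in normal form and $\nnodes{\GF\varphi}=\nnodes{\varphi}+1\le 3^{\nnodes{\varphi}}\cdot\nnodes{\varphi}$. Otherwise I pick a \W-subformula $\psi_1\W\psi_2$ of $\varphi$, abstract all its occurrences as $\varphi=\varphi[\psi_1\W\psi_2]$, and apply \eqref{eqGF2}:
\[ \GF\varphi[\psi_1\W\psi_2]\;\equiv\;\GF\varphi[\psi_1\U\psi_2]\;\vee\;\bigl(\FG\psi_1\wedge\GF\varphi[\true]\bigr). \]
Both $\GF\varphi[\psi_1\U\psi_2]$ and $\GF\varphi[\true]$ have strictly fewer \W-nodes than $\varphi$: the former has the same size (so the secondary hypothesis applies), while the latter is strictly smaller (so the primary hypothesis applies), and both can thus be normalized. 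The term $\FG\psi_1$ is handled by the primary hypothesis for the \emph{dual} operator, available because $\psi_1$ is a proper subformula, whence $\nnodes{\psi_1}<\nnodes{\varphi}$. Substituting the three normalized subterms yields a positive Boolean combination of normal-form formulas, hence a normal-form formula; correctness is immediate from \cref{lemma:GF2}.

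\textbf{The hard part: the size bound.} Termination and correctness follow cleanly from this measure (every cross-call between $\GF$ and $\FG$ strictly decreases the size, and every same-size self-call strictly decreases the operator count), so the real work is the factor $3^{\nnodes{\varphi}}$. The base $3$ reflects the three limit-subterms produced by each application of \eqref{eqGF2}. The main obstacle, absent from the single-rule argument of \cref{prop:snf}, is that \eqref{eqGF2} and \eqref{eqFG2} are \emph{mutually recursive}: removing a \W under a $\GF$ spawns an $\FG$-subproblem, and dually. A naive term-by-term accounting over the $\countw{\varphi}$ peeling steps charges each pulled-out operand its full normalized cost and overshoots the claimed bound. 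The crux will be to exploit that the operands pulled out at distinct steps, together with the shrinking residual formulas $\varphi[\true]$ (resp.\ $\varphi[\false]$), have sizes that combine \emph{sub-additively in the exponent}---using $3^a+3^b\le 3^{a+b}$ for $a,b\ge 1$---so that the per-step contributions telescope into a single factor $3^{\nnodes{\varphi}}$ rather than accumulating one factor per step. Pushing this bookkeeping through the nested induction, exactly in the spirit of the estimates in \cref{prop:23norm,prop:snf}, is the only delicate point; everything else is routine.
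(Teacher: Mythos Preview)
Your termination and correctness arguments are sound: the lexicographic measure $(\nnodes{\varphi},\countw{\varphi})$ for the $\GF$-case and $(\nnodes{\varphi},\countu{\varphi})$ for the $\FG$-case does strictly decrease on every recursive call, and the resulting positive Boolean combination is in normal form. That part is fine.

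The gap is exactly where you flag it, and your telescoping sketch does not close it. The same-size child $\GF\varphi[\psi_1\U\psi_2]$ already exhausts the budget: by your induction hypothesis its normal form may have size $3^{\nnodes{\varphi}}\cdot\nnodes{\varphi}$, so adding \emph{anything} from the other two children overshoots. Unrolling the secondary induction over all $\countw{\varphi}$ same-size steps produces that many pairs $(\FG\psi_1^{(i)},\GF\varphi^{(i)}[\true])$; but the $\psi_1^{(i)}$ are not disjoint subformulas of $\varphi$ (each is taken from a formula where earlier \W's have been turned into \U's), so the inequality $3^a+3^b\le 3^{a+b}$ does not let you pack their contributions into a single exponent $\nnodes{\varphi}$. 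A direct calculation along your lines yields an extra polynomial factor at each nesting level, and the levels compound.

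The paper avoids this by replacing your lexicographic pair with a \emph{single} scalar that drops on all three children at once. Call a \W-node, or a \U-node lying under some \W-node, a $\GF$-\emph{obstacle}; dually for $\FG$. The obstacle count is at most the number of \U- and \W-nodes, hence $\le\nnodes{\varphi}$. If one always selects a \emph{maximal} \W-subformula $\psi_1\W\psi_2$---a choice you do not make---then: in $\GF\varphi[\psi_1\U\psi_2]$ a \W-node disappears and, by maximality, the new \U-node is not under any \W, so no new obstacle is created; $\GF\varphi[\true]$ only loses nodes; and every $\FG$-obstacle of $\FG\psi_1$ (a \U-node, or a \W under a \U) was already a $\GF$-obstacle of $\GF\varphi$ (it sat under $\psi_1\W\psi_2$), while $\psi_1\W\psi_2$ itself is no longer counted. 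Thus the recursion depth is bounded by the initial obstacle count $\le\nnodes{\varphi}$, each rule application at most triples the overall size, and $3^{\nnodes{\varphi}}\cdot\nnodes{\varphi}$ follows without any telescoping. Without maximality even this measure can stall on the first child (a non-maximal \W becomes a fresh \U-under-\W), so the choice is not cosmetic.
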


\begin{proof}
A \GF-obstacle of a formula is a \W-node or a \U-node under a \W-node inside a \GF-node. Similarly, a \FG-obstacle is a \U-node or a \W-node under a \U-node inside a \FG-node. Finally, an obstacle is either a \GF-obstacle or an \FG-obstacle. We proceed by induction on the number of obstacles of $\GF \varphi$ or $\FG \varphi$. If they have no obstacles, then they are already in normal form (\cref{def:normalform}).

Assume $\GF \varphi$ has at least one obstacle. Then $\varphi$ contains at least one maximal \W-node $\psi_1 \W \psi_2$. We derive $\GF \varphi\hole$ from $\GF \varphi$ by replacing each \W-node labelled by $\psi_1 \W \psi_2$ by the special atomic proposition $\hole$. By \cref{eqGF2}, $\GF \varphi[\psi_1 \W \psi_2]$ is equivalent to
\[  \GF \varphi [\psi_1 \U \psi_2] \vee  (\FG \psi_1 \wedge \GF \varphi [\true]) \]
We claim that each of $\GF \varphi [\psi_1 \U \psi_2]$, $\GF \varphi [\true]$, and $\FG \psi_1$ has fewer obstacles than 
$\GF \varphi[\psi_1 \W \psi_2]$, and so can be normalized by induction hypothesis.  Indeed, $\GF \varphi [\psi_1 \U \psi_2]$, and $\GF \varphi[\true]$ have at least one \W-node less than $\varphi$, and the number of \U-nodes under a \W-node, due to the maximality of $\psi_1 \W \psi_2$, has not increased, and as a consequence it has fewer \GF-obstacles (and by definition no \FG-obstacles). For $\FG \psi_1$, observe first that every \FG-obstacle of $\FG \psi_1$ is a \GF-obstacle of $\GF \varphi[\psi_1 \W \psi_2]$. Indeed, the obstacles of $\FG \psi_1$ are the \U-nodes and the \W-nodes under \U-nodes; the former were under \W-nodes in $\varphi$, and the latter were \W-nodes of $\varphi$, and so both \GF-obstacles of $\GF\varphi$. Moreover, $\psi_1 \W \psi_2$ is a \GF-obstacle of $\varphi$, but not a \FG-obstacle of $\FG \psi_1$. Hence, the number of obstacles has decreased. 

Assume now that $\FG \varphi$ has at least one obstacle. Then $\varphi$ contains at least one maximal \U-node $\psi_1 \U \psi_2$. We derive $\FG \varphi\hole$ from $\FG \varphi$ by replacing each \U-node labelled by $\psi_1 \U \psi_2$ by the special atomic proposition $\hole$. By \cref{eqFG2}, $\FG \varphi[\psi_1 \U \psi_2]$ is equivalent to
\[ (\GF \psi_2 \wedge \FG \varphi[\psi_1 \W \psi_2]) \vee \FG \varphi[\false] \]
Each of $\GF \psi_2$, $\FG \varphi[\psi_1 \W \psi_2]$, and $\FG \varphi[\false]$ has fewer obstacles as $\FG \varphi[\psi_1 \U \psi_2]$, and can be normalized by induction hypothesis. The proof is as above.

	The size of the formula increases at most by a factor of 3 on each step, and the number of steps is bounded by the number of both \W-nodes and \U-nodes in $\varphi$, which is bounded by the total number of nodes in $\varphi$. So the formula has at most $3^{\nnodes{\varphi}} \nnodes{\varphi}$ nodes.
\end{proof}

\subsection*{The Normalization Theorem}
\label{sec:overview}

	The main result directly follows from the previous propositions.

\begin{theorem}\label{thm:main}
Every formula $\varphi$ of LTL is normalizable into a formula with at most $4^{7\nnodes{\varphi}}$ nodes.
\end{theorem}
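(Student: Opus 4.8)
The plan is to obtain the normal form by composing the three stage propositions in sequence, controlling the size by tracking—as invariants—both a bound on the whole formula and separate bounds on its limit subformulas. Write $n = \nnodes{\varphi}$. First I would apply \Cref{prop:23norm} to get an equivalent $\varphi_1$ in \fnf{} with $\nnodes{\varphi_1} \leq 4^{2n}\cdot n$ that moreover satisfies the limit property: every \GF-subformula $\GF\psi$ of $\varphi_1$ has $\psi$ a subformula of $\varphi$, and every \FG-subformula of $\varphi_1$ is itself a subformula of $\varphi$. The limit property is exactly what makes the remaining stages cheap: it bounds the number of distinct limit subformulas of $\varphi_1$ by $2n$ (at most $n$ formulas $\GF\psi$ with $\psi$ a subformula of $\varphi$, and at most $n$ \FG-subformulas of $\varphi$), so $\gfba{\varphi_1} \leq 2n$, and it bounds the size of every limit subformula of $\varphi_1$ by $n+1$ nodes, i.e.\ their arguments have at most $n$ nodes.

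Second, I would apply \Cref{prop:snf} to $\varphi_1$, which is legitimate since $\varphi_1$ is in \fnf. This yields an equivalent $\varphi_2$ in \snf{} with $\nnodes{\varphi_2} \leq 3^{\gfba{\varphi_1}}\cdot\nnodes{\varphi_1} \leq 3^{2n}\cdot 4^{2n}\cdot n$, and—by the second part of \Cref{prop:snf}—every limit subformula of $\varphi_2$ still has an argument of at most $n$ nodes. Being in \snf{} means $\gfba{\varphi_2}=0$, so no limit node lies under a temporal node; in particular every limit subformula $\GF\chi$ or $\FG\chi$ of $\varphi_2$ has a limit-operator-free argument $\chi$ (otherwise a limit subformula would sit under the temporal $\GF$/$\FG$ node, forcing $\gfba{\varphi_2}>0$). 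I can therefore invoke \Cref{prop:gffg} on each limit subformula, normalizing $\GF\chi$ (resp.\ $\FG\chi$) with $\nnodes{\chi}\leq n$ into an equivalent formula of at most $3^{n}\cdot n$ nodes, and substitute the result back in place. The step I expect to require the most care is verifying that this substitution produces a global formula in normal form (\Cref{def:normalform}): since $\gfba{\varphi_2}=0$, each replaced subtree sits only under Boolean connectives, so no \U-node introduced by a replacement can fall under a \W-node of the surrounding formula and the surrounding part has no \U-under-\W of its own (property 1); the limit nodes created by \Cref{prop:gffg} stay inside the normal-form replacements and hence under no temporal node (property 2); and property 3 holds inside each replacement by construction, while the surrounding part contains no limit nodes at all.

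Finally, the size bound is a routine estimate with ample slack. There are at most $\nnodes{\varphi_2}$ limit occurrences in $\varphi_2$, each replaced by a formula of at most $3^n\cdot n$ nodes, so the final formula $\varphi_3$ satisfies $\nnodes{\varphi_3} \leq \nnodes{\varphi_2}\cdot(1+3^n\cdot n) \leq 2n^2\cdot 3^{3n}\cdot 4^{2n}$ for $n\geq 1$. It then remains to check $2n^2\cdot 3^{3n}\cdot 4^{2n}\leq 4^{7n}$; writing $3^{3n}4^{2n}=432^{n}$ and $4^{7n}=16384^{n}$, this reduces to $2n^2\leq (16384/432)^n$, which holds for all $n\geq 1$ since the right-hand side exceeds $37^n$. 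This yields the claimed bound $4^{7n}$ and completes the proof.
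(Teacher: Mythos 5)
Your proposal is correct and follows essentially the same route as the paper's own proof: compose \Cref{prop:23norm}, \Cref{prop:snf}, and \Cref{prop:gffg} in sequence, using the ``limit property'' of Stage~1 to bound $\gfba{\cdot}\leq 2\nnodes{\varphi}$ and the size of limit arguments by $\nnodes{\varphi}$, and then substitute the normalized limit subformulas back into the \snf{} formula. The intermediate size estimates are organized slightly differently (you bound $\nnodes{\varphi_2}(1+3^n n)$ where the paper uses $\nnodes{\varphi''}+\gfba{\varphi''}\cdot 3^n n$), but both arithmetic checks are valid and land within $4^{7\nnodes{\varphi}}$; your explicit verification that the final substitution preserves the three normal-form conditions is a point the paper leaves implicit.
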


\begin{proof}
Any LTL formula $\varphi$ can be transformed into an equivalent $\varphi'$ in \fnf{} of size $\nnodes{\varphi'} \leq 4^{2\nnodes{\varphi}} \cdot \nnodes{\varphi}$ by \cref{prop:23norm}. Moreover, $\gfba{\varphi'} \leq 2 \cdot \nnodes{\varphi}$, since every \FG-subformula of $\psi'$ and every argument $\psi$ of a \GF-subformula of $\varphi'$ is a subformula of $\varphi$. In addition, $\nnodes{\psi} \leq \nnodes{\varphi}$ for every $\GF \psi$ subformula of $\varphi$.

	According to \cref{prop:snf}, for every formula $\varphi'$ in \fnf{} there is an equivalent formula $\varphi''$ in \snf{} with
\begin{equation}
	\label{eq:phi2nd}
	\nnodes{\varphi''} \leq 3^{\gfba{\varphi'}} \cdot \nnodes{\varphi'} \leq 3^{2 \nnodes{\varphi}} \cdot (4^{2\nnodes{\varphi}} \cdot \nnodes{\varphi}) \leq  3^{2\nnodes{\varphi}} \cdot 4^{3\nnodes{\varphi}}
	\tag{$\star$}
\end{equation}
This formula is a Boolean combination of  limit formulas with at most $\nnodes{\varphi}$ nodes, not containing any proper limit node, and other temporal formulas containing neither limit nodes nor \U-nodes under \W-nodes. The latter are in $\Sigma_2$ and \cref{prop:gffg} deals with the former. Notice that every $\GF \psi$ and $\FG \psi$ subformula has at most $\nnodes{\varphi}$ nodes and thus  can be normalized into a formula with at most $3^{\nnodes{\varphi}} \nnodes{\varphi}$ nodes. The result $\varphi'''$ of replacing these limit subformulas by their normal forms within $\varphi''$ is a Boolean combination of normal forms, and so we are done. The number of nodes in the resulting formula $\varphi'''$ is at most:
\begin{align*}
\nnodes{\varphi'''} \leq & ~ \nnodes{\varphi''} + \gfba{\varphi''} \cdot 3^{\nnodes{\varphi}} \cdot \nnodes{\varphi}
                    & \gfba{\varphi''} \leq |\varphi''| \\
		    \leq & ~ \nnodes{\varphi''} \cdot 3^{2\nnodes{\varphi}} \cdot (\nnodes{\varphi} + 1)
		    & (\text{\ref{eq:phi2nd}}) \\
		    \leq & ~ 4^{3\nnodes{\varphi}} \cdot 3^{4\nnodes{\varphi}} \cdot (\nnodes{\varphi} + 1)
		    & \nnodes{\varphi} + 1 \leq 4^{\nnodes{\varphi}/2} \\
		    \leq & ~ 4^{3\nnodes{\varphi}} \cdot 4^{(4 \log_4 3 + \frac12) \nnodes{\varphi}} \leq 4^{7\nnodes{\varphi}}
\end{align*}   
\end{proof}

\section{Summary of the normalization algorithm} \label{sec:summary}

We summarize the steps of the normalization algorithm described and proven in~\cref{sec:main}. Recall that a formula is in normal form if{}f it satisfies the following properties:
\normalFormConds
The normalization algorithm applies the rules in~\cref{tab:allrules} as follows to fix any violation of these properties:
\begin{enumerate}
	\item \U-nodes under \W-nodes and not under limit nodes are removed using rules (\ref{eqWU}) and (\ref{eqUW}). This may introduce new \GF-subformulas. By applying (\ref{eqUW}) only to highest \U-nodes of $\varphi_1$ the number of new \GF-subformulas is only linear in the size of the original formula.
	\item Limit nodes under other temporal nodes are pulled out using rules (\ref{eqGF1}) and (\ref{eqFG1}). By applying the rules only to the lowest limit nodes, it only needs to be applied once for each limit subformula. 
	\item \W-nodes under \GF-nodes are removed using rule (\ref{eqGF2}), and \U-nodes under \FG-nodes are removed using rule (\ref{eqFG2}). This may produce new limit nodes of smaller size that are handled recursively. Choosing highest \W- and \U-nodes ensures that the process produces only a single exponential blowup over the initial size of the formula.
\end{enumerate}
{
\begin{table}[tb]
\begin{equation*}
\begin{array}{lcrcl@{\;\;\;}}
\multirow{3}*{\text{Stage 1:}~} & (\ref{eqWU}) \quad & \varphi_1 \W \varphi_2[\psi_1 \U \psi_2] & \equiv & \varphi_1 \U \varphi_2[\psi_1 \U \psi_2] \vee \G \varphi_1  \\[1ex]
	                  & (\ref{eqUW}) \quad & \varphi_1[\psi_1 \U \psi_2] \W \varphi_2 & \equiv & (\GF \psi_2 \wedge \varphi_1[\psi_1 \W \psi_2] \W \varphi_2)   \\ & & & & ~ \vee ~ \varphi_1[\psi_1 \U \psi_2] \U (\varphi_2 \vee \G \varphi_1[\false])  \\[1em] \hline \\
\multirow{2}*{\text{Stage 2:}~} &  (\ref{eqGF1}) \quad & \varphi[\GF \psi] & \equiv & (\GF \psi \wedge \varphi[\true]) \vee \varphi[\false]  \\[1ex]
	&  (\ref{eqFG1}) \quad & \varphi[\FG \psi] & \equiv & (\FG \psi \wedge \varphi[\true]) \vee \varphi[\false]  \\[1em] \hline \\
\multirow{2}*{\text{Stage 3:}~} & (\ref{eqGF2}) \quad & \GF \varphi [\psi_1 \W \psi_2] & \equiv & \GF \varphi [\psi_1 \U \psi_2] \vee  (\FG \psi_1 \wedge \GF \varphi [\true]) \\[1ex]
	& (\ref{eqFG2})  \quad & \FG \varphi[\psi_1 \U \psi_2] & \equiv & (\GF \psi_2 \wedge \FG \varphi[\psi_1 \W \psi_2]) \vee \FG \varphi[\false]  \\[1ex]
\end{array}
\end{equation*}
\caption{Normalization rules.} \label{tab:allrules}
\end{table}
}

After the three steps, a formula in normal form is obtained with a single exponential blowup in the number of nodes.

Moreover, notice that $\psi_1$ itself does not play any role in rules (2) and (6), and neither does $\psi_2$ in (5). Hence, the application of (1) can be made mode efficient by replacing not only every occurrence of $\psi_1 \U \psi_2$ outside a limit subformula with $\psi_1 \W \psi_2$ and $\false$, but also every occurrence of $\psi \U \psi_2$ for any formula $\psi$ by $\psi \W \psi_2$ and by $\false$. The same holds for rules (5) and (6).

\begin{example}	
Let us apply the procedure to the formula $\varphi_n$ in~\cref{example:exp}. In stage 1, rule (\ref{eqUW}) matches the subformula $(a_0 \U a_1) \W a_2$ 
and rewrites it to $\GF a_1 \wedge (a_0 \W a_1) \W a_2 \vee (a_0 \U a_1) \U (a_2 \vee \false \W \false)$, where $\false \W \false$ can be simplified to $\false$ and removed. The rewritten formula is in 1-form, because there is no \U-node under a \W-node, so we can continue to stage 2. Now, we must pull the \GF-node $\GF a_1$ out the cascade of \U-nodes using rule (\ref{eqGF1}). This yields
$$\begin{array}{rcl}
\varphi_n & \equiv & (\GF a_1 \wedge (\cdots ((((a_0 \W a_1) \W a_2 \vee (a_0 \U a_1) \U a_2) \U a_3) \U a_4) \cdots \U a_n) \\[0.1cm]
& & \vee \; ((\cdots ((((a_0 \U a_1) \U a_2) \U a_3) \U a_4) \cdots \U a_n)
\end{array}$$
Since the only remaining  limit node is outside any temporal formula, we have obtained a formula in 1-2-form and the procedure arrives to stage 3. Again, the only limit subformula is $\GF a_1$, and $a_1$ does not contain any \W-node, so the formula is completely normalized and we have finished. Observe that $\varphi_n$ has been normalized by exactly two rule applications for all $n \geq 3$, so the algorithm proceeds in linear-time for this family of formulas. The result is not identical, but very similar to the one in~\cref{example:exp}.
\end{example}

\begin{table}[tb]
\[\begin{array}{r@{\;\equiv\;}l@{\;\;\;}c}
	\varphi_1[\psi_1 \M \psi_2] \W \varphi_2 & (\GF \psi_1 \wedge \varphi_1[\psi_1 \R \psi_2] \W \varphi_2) \vee \varphi_1[\psi_1 \M \psi_2] \U (\varphi_2 \vee \G \varphi_1[\false]) &  \\[1ex]
	\varphi_1 \W \varphi_2[\psi_1 \M \psi_2] & \varphi_1 \U \varphi_2[\psi_1 \M \psi_2] \vee \G \varphi_1 & \\[1ex]
	\varphi_1[\psi_1 \U \psi_2] \R \varphi_2 & \varphi_1[\psi_1 \U \psi_2] \M \varphi_2 \vee \G \varphi_2 & \\[1ex]
	\varphi_1[\psi_1 \M \psi_2] \R \varphi_2 & \varphi_1[\psi_1 \M \psi_2] \M \varphi_2 \vee \G \varphi_2 & \\[1ex]
	\varphi_1 \R \varphi_2[\psi_1 \U \psi_2] & (\GF \psi_2 \wedge \varphi_1 \R \varphi_2[\psi_1 \W \psi_2]) \vee (\varphi_1 \vee \G \varphi_2[\false]) \M \varphi_2[\psi_1 \U \psi_2] & \\[1ex]
	\varphi_1 \R \varphi_2[\psi_1 \M \psi_2] & (\GF \psi_1 \wedge \varphi_1 \R \varphi_2[\psi_1 \R \psi_2]) \vee (\varphi_1 \vee \G \varphi_2[\false]) \M \varphi_2[\psi_1 \M \psi_2] &  \\[1.5ex]
	\GF \varphi [\psi_1 \R \psi_2] & \GF \varphi [\psi_1 \M \psi_2] \vee  (\FG \psi_2 \wedge \GF \varphi [\true]) & \\[1ex]
	\FG \varphi[\psi_1 \M \psi_2] & (\GF \psi_1 \wedge \FG \varphi[\psi_1 \R \psi_2]) \vee \FG \varphi[\false] &  \\[1ex]
\end{array}\]
\caption{Normalization rules for $\R$ and $\M$.} \label{tab:allrulesRM}
\end{table}

\section{Extensions}
\label{sec:extensions}

\paragraph{The operators $\R$ and $\M$.} We have omitted these operators from the proof and the normalization procedure, since they can be expressed in terms of the subset of operators we have considered. However, this translation exponentially increases the number of nodes of the formula, so handling them directly is convenient for efficiency. Their role at every step of the procedure is analogous to that of the \U\ and \W\ operators, i.e. we treat \R\ in the same as \W\ and we treat \M\ in the same way as $\U$. The corresponding rules are shown in~\cref{tab:allrulesRM}.

\paragraph{Dual normal form.} Recall that a formula is in dual normal form if it satisfies conditions 2. and 3. of Definition \ref{def:normalform} and no \W-node is under a \U-node. Given a formula $\varphi$, let $\overline{\varphi}$ be a formula in negation normal form equivalent to $\neg\varphi$,  and let $\psi$ be a formula in primal normal form equivalent to $\overline{\varphi}$. Since $\varphi \equiv  \neg \overline{\varphi} \equiv \neg \psi$, pushing the negation into $\psi$ yields a formula equivalent to $\varphi$ in dual normal form.

\paragraph{Past LTL.}  Past LTL is an extension of LTL with past operators like yesterday ($\mathbf{Y}$), since ($\mathbf{S}$), etc. In an appendix of \cite{gabbay87}, Gabbay introduced eight rewrite rules to pull future operators out of past operators. Combining these rules with ours yields a procedure that transforms a Past LTL formula into a normalized LTL formula, where past operators are gathered in past-only subformulas, and so can be considered  atomic propositions. 



\section{Experimental Evaluation}
\label{sec:experiments}

We have implemented the normalization procedure summarized in~\cref{sec:summary} as a C++ program,\footnote{The implementation is available at \url{https://github.com/ningit/ltl2delta2rs}.} and compared its performance and the size of the generated formulas with the implementation of the procedure of~\cite{SickertE20} included in the Owl tool~\cite{KretinskyMS18}.\footnote{We evaluate the tool build from commit \texttt{2fb342a09d3a9d7025b219404c764021d17b7ebd} of \url{https://gitlab.lrz.de/i7/owl/}.} In order to make the comparison as fair as possible, we have implemented the same basic simplification rules that are eagerly applied during the normalization process.

We consider the following test suites: TLSF($a$-$b$) is the repertory of formulas of the 2021 Reactive Synthesis Competition of sizes between $a$ and $b$; random formulas are a set of 1000 randomly generated formulas, $\W\U^*$ is the family of formulas in \cref{example:exp} for $2 \leq n \leq 200$; finally, $(\W\U)^*$ is the family defined by $\varphi_0 = a_0$ and $\varphi_{n+1} = (\varphi_n \U a_{2n - 1}) \W a_{2n}$ for $1 \leq n \leq 5$. Notice that the last family is limited to $n = 5$ because Owl cannot handle $\varphi_6$ due to the size of the powersets involved, while this is not a limitation for the new procedure.


\begin{table}[ht]
\centering
\newcommand*\rd[1]{\multirow{2}*{#1}}
\begin{tabular}{l @{\quad} r @{\quad} r @{\quad} r @{\quad} r @{\quad} l}
	\toprule
	\rd{Test cases}  	& \multicolumn3c{Size blowup}        & Time   & \\ \cmidrule{2-4}
				& \begin{tabular}{c}Mean\\(Tree)\end{tabular}      & \begin{tabular}{c}Worst-case\\(Tree)\end{tabular} & \begin{tabular}{c}Worst-case\\(DAG)\end{tabular}   & (ms)   & \\
	\midrule
	\rd{Random formulas}    & 1.38     & 28.47    & 4.64    & 67     & New \\
	                        & 1.06     & 10.69    & 3.90    & 589    & Owl \\
	\rd{$\W\U^*$}           & 2.12     & 3.57     & 2.20    & 90     & New \\
	                        & 2.12     & 4.00     & 2.60    & 32343  & Owl \\
	\rd{$(\W\U)^*$}         & 193.58   & 744.29   & 17.81   & 11     & New \\
	                        & 27.86    & 73.33    & 10.05   & 54     & Owl \\
	\rd{TLSF(-100)}         & 1.04     & 4.60     & 3.14    & 65     & New \\
	                        & 1.17     & 8.02     & 2.48    & 867    & Owl \\
	\rd{TLSF(100-300)}      & 2.14     & 369.66   & 15.10   & 230    & New \\
	                        & 1.14     & 12.54    & 2.47    & 8636   & Owl \\
	\bottomrule \\
\end{tabular}
\caption{Experimental comparison of our normalization procedure and the one of~\cite{SickertE20}.}
\label{table:experiment}
\end{table}
\Cref{table:experiment} shows the mean and worst-case blowup of the syntax tree of the formulas (i.e., the ratio between the sizes of the formulas before and after normalization), the worst-case blowup of their directed acyclic graphs, and the execution time. Generally, the new procedure is faster but generates larger formulas.  However, the execution time and the size of the formulas can be strongly affected by slight changes in the procedure. For example, selecting an innermost instead of an outermost $\psi_1 \U \psi_2$ when matching $\varphi[\psi_1 \U \psi_2]$ in rule (\ref{eqWU}) yields much bigger formulas in some examples, like TLSF(-100), but produces the opposite effect in others, like $(\W\U)^*$. Applying stage 2 separately to each topmost temporal formula is generally better than applying it to the whole term, but it can sometimes be slightly worse. Characterizing these situations and designing a procedure that adapts to them is a subject for future experiments.

\section{Conclusions}
We have presented a simple rewrite system that transforms any LTL formula into an equivalent formula in $\Delta_2$. 
We think that, together with \cite{SickertE20}, this result demystifies the Normalization Theorem of Chang, Manna, and Pnueli, which heavily relied on automata-theoretic results, and involved a nonelementary blowup. Indeed, the only conceptual difference between our procedure and a rewrite system for bringing Boolean formulas in CNF is the use of rewrite rules with contexts. 

The normalization procedure of  Sickert and Esparza has already found applications to the translation of LTL formulas into deterministic or limit-deterministic $\omega$-automata \cite{SickertE20}. Until now normalization had not been considered, because of the non-elementary blow-up, much higher than the double exponential blow-up of existing constructions. With the new procedure, translations that first normalize the formula, and then apply efficient formula-to-automaton procedures specifically designed for formulas in normal form, have become competitive. Our new algorithm, purely based on rewriting rules, makes this even more attractive. More generally, we think that the design of analysis procedures for formulas in normal form (to check satisfiability, equivalence, or other properties) should be further studied in the coming years.

\bibliographystyle{splncs04}
\bibliography{refs}

\appendix
\FloatBarrier
\newpage

%
%

\section{Appendix}

Let $\varphi \equiv^w \psi$ denote $w_k \models \varphi$ if{}f $w_k \models \psi$ for all $k \in \N$. The next two straightforward lemmas will be used pervasively in the following proofs.

\begin{lemma}
\label{lemma:nnf}
	For every formula $\varphi$ in negation normal form (and thus for every LTL formula we consider in this article), and for every two formulas $\psi$ and $\psi'$, $\psi \models \psi'$ implies $\varphi[\psi] \models \varphi[\psi']$.
\end{lemma}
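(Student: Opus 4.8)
The plan is to prove the statement by structural induction on the context formula $\varphi$, exploiting the fact that, since $\varphi$ is in negation normal form and every occurrence of the placeholder $\hole$ is positive, every operator on the path from the root of $T_\varphi$ down to a placeholder is monotone in the argument containing that placeholder. Recall that $\psi \models \psi'$ abbreviates $\lang(\psi) \subseteq \lang(\psi')$, i.e. $w \models \psi$ implies $w \models \psi'$ for \emph{every} word $w$; this is exactly the form of hypothesis we need, since the semantics of the temporal operators evaluate their arguments on suffixes $w_j$ of $w$, and a language inclusion quantifies over all words, including those suffixes.

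The base cases are immediate. If $\varphi \in \{\true, \false, a, \neg a\}$, then $\varphi$ contains no placeholder, so $\varphi[\psi] = \varphi = \varphi[\psi']$ and $\varphi[\psi] \models \varphi[\psi']$ holds trivially. If $\varphi = \hole$, then $\varphi[\psi] = \psi$ and $\varphi[\psi'] = \psi'$, and the claim is precisely the hypothesis $\psi \models \psi'$. For the inductive step I would treat each remaining operator, applying the induction hypothesis to the immediate subcontexts (which holds trivially for any subcontext not containing a placeholder). For a binary Boolean connective $\varphi = \varphi_1 \wedge \varphi_2$ or $\varphi = \varphi_1 \vee \varphi_2$, the induction hypothesis gives $\varphi_i[\psi] \models \varphi_i[\psi']$, and the claim follows from the pointwise definition of satisfaction for $\wedge$ and $\vee$. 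For $\varphi = \X \varphi_1$, we have $w \models \X\varphi_1[\psi]$ iff $w_1 \models \varphi_1[\psi]$, and the induction hypothesis applied to $w_1$ yields $w_1 \models \varphi_1[\psi']$. For $\varphi = \varphi_1 \U \varphi_2$ (and analogously $\varphi_1 \W \varphi_2$), assume $w \models \varphi_1[\psi] \U \varphi_2[\psi]$; then there is $k$ with $w_k \models \varphi_2[\psi]$ and $w_j \models \varphi_1[\psi]$ for all $j < k$, and applying the induction hypothesis at each suffix $w_k$ and $w_j$ gives the corresponding satisfactions for $\psi'$, witnessing $w \models \varphi_1[\psi'] \U \varphi_2[\psi']$. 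Finally, for the limit operators $\varphi = \GF \varphi_1$ and $\varphi = \FG \varphi_1$ one either unfolds them into $\G\F$ and $\F\G$ and reuses the cases just established, or argues directly from the semantics that both are monotone in their single argument.

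There is no genuine obstacle here: the entire content of the lemma is that negation normal form guarantees monotonicity, so that replacing a substituend by a logically stronger one can only shrink, never enlarge, the language of the whole formula. The only point that deserves care is to phrase the induction hypothesis as a true language inclusion, valid for all words, rather than as an implication at a single fixed word, so that it may legitimately be instantiated at the various suffixes produced by the semantics of the temporal operators.
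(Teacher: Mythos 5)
Your proof is correct: the paper states this lemma without proof (it is explicitly labelled ``straightforward''), and your structural induction on the context --- using that negation normal form plus positivity of the placeholder makes every constructor above a hole monotone, and that the hypothesis $\psi \models \psi'$ is a language inclusion and hence instantiable at every suffix $w_j$ --- is exactly the argument the authors implicitly rely on. The only cases you leave schematic ($\W$, $\R$, $\M$ in the extension) are handled identically, so there is nothing missing.
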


\begin{lemma}
\label{lemma:replace}
	For every formula $\varphi$ and word $w$, $\psi \equiv^w \psi'$ implies $\varphi[\psi] \equiv^w \varphi[\psi']$.
\end{lemma}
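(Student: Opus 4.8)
The plan is to prove the statement by structural induction on the context formula $\varphi$, that is, on the formula-with-placeholders into which $\psi$ and $\psi'$ are substituted. Unfolding the definition of $\equiv^w$, the goal $\varphi[\psi] \equiv^w \varphi[\psi']$ reads: for every $k \in \N$, $w_k \models \varphi[\psi]$ if{}f $w_k \models \varphi[\psi']$. I would prove exactly this biconditional, for all $k$ simultaneously, by induction on $\varphi$. Crucially, the induction hypothesis supplied for a subcontext is again a statement quantified over all suffixes of $w$, which is precisely what the temporal cases require, so no strengthening of the statement is necessary.

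For the base cases, if $\varphi$ is $\true$, $\false$, an atomic proposition $a$, or a negated atom $\neg a$, then its truth at $w_k$ is independent of what fills the placeholder (it is either constant or depends only on $w_k[0]$), so both sides agree trivially. The remaining base case is the placeholder itself, $\varphi = \hole$: here $\varphi[\psi] = \psi$ and $\varphi[\psi'] = \psi'$, and the required equivalence $w_k \models \psi$ if{}f $w_k \models \psi'$ for all $k$ is exactly the hypothesis $\psi \equiv^w \psi'$.

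For the inductive step I would treat the Boolean connectives and the temporal operators separately. If $\varphi = \varphi_1 \wedge \varphi_2$ or $\varphi_1 \vee \varphi_2$, the truth value at $w_k$ is the corresponding Boolean combination of the truth values of $\varphi_1[\cdot]$ and $\varphi_2[\cdot]$ at the same point $w_k$, so applying the induction hypotheses for $\varphi_1$ and $\varphi_2$ at $k$ closes the case. For the temporal operators the point of evaluation shifts: $w_k \models \X \varphi_1[\cdot]$ refers to $w_{k+1}$; $w_k \models \varphi_1[\cdot] \U \varphi_2[\cdot]$ and $w_k \models \varphi_1[\cdot] \W \varphi_2[\cdot]$ refer to the suffixes $w_{k+m}$ for various $m \geq 0$; and the limit operators $\GF \varphi_1[\cdot]$ and $\FG \varphi_1[\cdot]$ refer, through their infinitely-often and almost-always semantics, to the truth of $\varphi_1[\cdot]$ at the suffixes $w_{k+m}$. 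In each case every suffix inspected by the semantics is again a suffix of $w$, so the induction hypothesis for the relevant subcontext is available at that exact point; it transfers each universal and existential witness from the $\psi$-side to the $\psi'$-side and back, and the biconditional at $w_k$ follows.

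The only real subtlety, and the step I would be most careful about, is the bookkeeping of evaluation points: one must invoke the induction hypothesis at the shifted point $w_{k+m}$ that the operator's semantics actually inspects, never at $w_k$ alone. This is exactly why the statement is phrased with $\equiv^w$ (agreement on all suffixes of $w$) rather than with a single $w \models$: were the hypothesis and conclusion to assert agreement only at $w$ itself, the induction would fail to close for $\X$, $\U$, $\W$, $\GF$, and $\FG$, since these operators read their arguments at later suffixes. For the limit operators one may either argue directly from their ``infinitely often''/``almost always'' reading or, equivalently, expand $\GF$ and $\FG$ into $\G\F$ and $\F\G$ and fall back on the cases already handled. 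Note that Lemma~\ref{lemma:nnf} is the one-sided, monotone analogue for language inclusion $\models$, relying on positivity of placeholders in negation normal form; the present lemma is the symmetric version for the word-relative equivalence $\equiv^w$ and needs no positivity assumption, precisely because it replaces inclusion by two-sided agreement.
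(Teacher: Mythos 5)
Your proof is correct and is exactly the argument the paper intends: the paper states this lemma without proof, calling it straightforward, and the routine filling-in is precisely your structural induction on the context, where the quantification over all suffixes built into $\equiv^w$ is what lets the induction hypothesis be invoked at the shifted evaluation points required by $\X$, $\U$, $\W$, $\GF$, and $\FG$ without any strengthening. Your closing observation correctly contrasts this two-sided, positivity-free lemma with the one-sided Lemma~\ref{lemma:nnf}, which does rely on the positive occurrence of the placeholder in negation normal form.
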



\noindent\textbf{Lemma \ref{lem:UWandWU}.} 
(\ref{eqWU}) $\varphi_1 \W \varphi_2[\psi_1 \U \psi_2] \equiv (\varphi_1 \U \varphi_2[\psi_1 \U \psi_2]) \vee \G \varphi_1$, \newline
(\ref{eqUW}) $\varphi_1[\psi_1 \U \psi_2] \W \varphi_2 \equiv (\GF \! \psi_2 \wedge \varphi_1[\psi_1 \W \psi_2] \W \varphi_2) \vee \varphi_1[\psi_1 \U \psi_2] \U (\varphi_2 \vee (\G \varphi_1[\false]))$

\begin{proof}

For Equation (\ref{eqWU}) observe that, by the definition of the semantics of LTL, $\varphi_1 \W \varphi_2 \equiv  \varphi_1 \U \varphi_2 \vee \G \varphi_2$ holds for arbitrary formulas $\varphi_1, \varphi_2$. For Equation (\ref{eqUW}) we proceed as follows.

\smallskip\noindent ($\models$): Assume $w \models \varphi_1[\psi_1 \U \psi_2] \W \varphi_2$. If $w \models \GF \psi_2$, then we have $w_k \models \psi_1 \W \psi_2$ if{}f  $w_k \models \psi_1 \U \psi_2$ for every  $k \in \N$; by \Cref{lemma:replace} we then have $w \models \varphi_1[\psi_1 \W \psi_2] \W \varphi_2$, and we are done. If $w \not\models \GF \psi_2$, then there is $n \in \N$ such that $w_k \not\models \psi_2$ for all $k \geq n$, and so $\psi_1 \U \psi_2 \equiv^{w_n} \false$. Since $w$ satisfies the left-hand side, there are two possible cases
		\begin{itemize}
			\item There is $m \in \N$ such that $w_m \models \varphi_2$ and $w_k \models \varphi_1[\psi_1 \U \psi_2]$ for every $k < m$. Then, the second disjunct holds and we are done.
			\item $w_k \models \varphi_1[\psi_1 \U \psi_2]$ for every $k \in \N$. Then $w_k \models \varphi_1[\false]$ for all $k \geq n$ by \cref{lemma:replace} with $\psi_1 \U \psi_2 \equiv^{w_n} \false$. So $w_n \models \varphi_2 \vee \G \varphi_1[\false]$, and we are done.
		\end{itemize}

\noindent ($\leftmodels$): Assume $w$ satisfies the right-hand side formula. We consider two cases:
	\begin{itemize}
		\item $w \models \GF \psi_2 \wedge \varphi_1[\psi_1 \W \psi_2] \W \varphi_2$. Then $w \models \GF \psi_2$ and, as above,  we have $w_k \models \psi_1 \W \psi_2$ if{}f  $w_k \models \psi_1 \U \psi_2$ for every  $k \in \N$. Since $w \models \varphi_1[\psi_1 \W \psi_2] \W \varphi_2$, we get $w \models \varphi_1[\psi_1 \U \psi_2] \W \varphi_2$.
		\item $w \models \varphi_1[\psi_1 \U \psi_2] \U (\varphi_2 \vee \G \varphi_1[\false])$. Then there is $n \in \N$ such that $w_n \models \varphi_2 \vee \G \varphi_1[\false]$ and $w_k \models \varphi_1[\psi_1 \U \psi_2]$ for every $k < n$. We consider two cases: $w_n \models \varphi_2$ or $w_k \models \varphi_1[\false]$ for all $k \geq n$. In the first case, $w_n \models \varphi_1[\psi_1 \U \psi_2] \W \varphi_2$ by definition of $\W$. For the second, since $\false \models \psi_1 \U \psi_2$ and by \cref{lemma:nnf}, $w_k \models \varphi_1[\false]$ implies $w_k \models \varphi_1[\psi_1 \U \psi_2]$. Therefore, we have $w_k \models \varphi_1[\psi_1 \U \psi_2]$ for every $k < n$ and for every $k \geq n$, so $w \models \varphi_1[\psi_1 \U \psi_2] \W \varphi_2$.
\end{itemize}
\end{proof}

\noindent\textbf{Lemma \ref{lemma:GF1}.} (\ref{eqGF1}) $\varphi[\GF \psi] \equiv (\GF \psi \wedge \varphi[\true] ) \vee \varphi[\false]$, \newline
(\ref{eqFG1}) $\varphi[\FG \psi] \equiv (\FG \psi \wedge \varphi[\true] ) \vee \varphi[\false]$

\begin{proof}
We prove that $\varphi[\psi] \equiv (\psi \wedge \varphi[\true]) \vee \varphi[\false]$ for every $\psi$ such that $\G \psi \equiv \psi$, i.e., $w \models \psi$ if{}f $w_k \models \psi$ for all $k \in \N$.
This is a generalization of (3) and (4), since both $\GF \psi'$ and $\FG \psi'$ satisfy the requirement about $\psi$ for any $\psi'$. Since $\psi$ is satisfied by all suffixes or for no suffix of a word $w$, it follows that either $\psi \equiv^w \true$ or $\psi \equiv^w \false$ for any word $w$.

\medskip

\noindent ($\models$):  If $w$ satisfies $\psi$, we also have $w \models \varphi[\true]$ by \cref{lemma:replace}, and so the first disjunct is satisfied. Otherwise, again by \cref{lemma:replace}, we have $w \models \varphi[\false]$, so the second disjunct holds.

\medskip

\noindent ($\leftmodels$):  Suppose $w$ satisfies the first disjunct, then $w \models \psi$ and $\varphi[\psi] \equiv^w \varphi[\true]$ by \cref{lemma:replace}. Otherwise, the second disjunct holds, and then $w \models \varphi[\psi]$ by \cref{lemma:nnf} since $\false \models \psi$.
\end{proof}

\lemmaGFII*

\begin{proof}

\noindent Proof of Equation (\ref{eqGF2}).

\noindent ($\leftmodels$):  We prove the following claims, which immediately imply the result:
\begin{itemize}
\item $\GF \varphi [\psi_1 \U \psi_2]  \models  \GF \varphi [\psi_1 \W \psi_2]$. \\
Follows from $\psi_1 \U \psi_2 \models \psi_1 \W \psi_2$ and \cref{lemma:nnf}.
\item $\FG \psi_1 \wedge \GF \varphi [\true] \models \GF \varphi [\psi_1 \W \psi_2]$. \\
Assume $w \models \FG \psi_1 \wedge \GF \varphi [\true]$. On the one hand, there must be $n \in \N$ such that $w_k \models \G\psi_1$ for all $k \geq n$. Since $\G \psi_1 \models  \psi_1 \W \psi_2$, $w_k \vDash \psi_1 \W \psi_2$ for all $k \geq n$. On the other hand, $\varphi[\true]$ holds in infinitely many suffixes of $w$. By \cref{lemma:replace} and $\psi_1 \W \psi_2 \equiv^{w_k} \true$ for all $k \geq n$, $\varphi[\psi_1 \W \psi_2]$ holds infinitely often in $w$ and we are done.

\end{itemize}

\noindent ($\models$):  Assume $w$ satisfies $\GF \varphi [\psi_1 \W \psi_2]$, and so that  $w' \models \varphi [\psi_1 \W \psi_2]$ for infinitely many suffixes $w'$ of $w$. We prove the following three claims, which immediately imply the result:
\begin{itemize}
\item If $w \models \FG \psi_1$ then $w \models \GF \varphi [\true]$.\\
Since $w \models \FG \psi_1 \wedge \GF \varphi [\psi_1 \W \psi_2]$, infinitely many suffixes of $w$ satisfy $\G\psi_1$ and $\varphi [\psi_1 \W \psi_2]$. Since $\G\psi_1 \models \psi_1 \W \psi_2$, these suffixes also satisfy $\psi_1 \W \psi_2$, and so also $\varphi [\true]$.

\item If $w \not\models \FG \psi_1$, then $\psi_1 \U \psi_2 \equiv^w \psi_1 \W \psi_2$ because $\G \psi_1$ never holds and $\psi_1 \W \psi_2 \equiv \psi_1 \U \psi_2 \vee \G \psi_1$. Therefore, $w \models \GF[\psi_1 \W \psi_2]$ by \cref{lemma:replace} and the first clause of the disjunction.
\end{itemize}

\medskip 

\noindent Proof of Equation (\ref{eqFG2}).

\noindent ($\leftmodels$): If the second clause of the right-hand side disjunction is satisfied, $\FG \varphi[\psi_1 \U \psi_2]$ holds since $\varphi$ is in negation normal form and $\false \models \psi_1 \U \psi_2$. Otherwise, the first disjunct must be true, so for any word $w$ satisfying $\GF \psi_2$, $\psi_1 \W \psi_2 \equiv^w \psi_1 \U \psi_2$ and so they can be replaced inside the context by \cref{lemma:replace}.

\medskip \noindent ($\models$): Assume $w$ satisfies $\FG \varphi[\psi_1 \U \psi_2]$, i.e., there is an $n \in \N$ such that $w_k \models \varphi[\psi_1 \U \psi_2]$ for all $k \geq n$. We consider two cases whether $w \models \GF \psi_2$ or not.
\begin{itemize}
	\item If $w \models \GF \psi_2$, $\psi_1 \U \psi_2 \equiv^w \psi_1 \W \psi_2$ for every $k \in \N$, so $w_k \models \varphi[\psi_1 \W \psi_2]$ for all $k \geq n$ by \cref{lemma:replace}. Hence, the first disjunct holds.
	\item Otherwise, there is an $m \geq n$ such that $w_k \not\models \psi_2$ for all $k \geq m$. As a result, $w_k \not\models \psi_1 \U \psi_2$ and $\psi_1 \U \psi_2 \equiv^{w_k} \false$ for all $k \geq m$. Using that $w_k \models \varphi[\psi_1 \U \psi_2]$ and \cref{lemma:replace}, $w_k \models \varphi[\false]$ for all $k \geq m$, so $\FG \varphi[\false]$.
\end{itemize}

\end{proof}

\end{document}